\newtheorem{theorem}{\underline{Theorem}}
\newtheorem{corollary}{\underline{Corollary}}
\newtheorem{lemma}{\underline{Lemma}}
\newtheorem{remark}{\underline{Remark}}
\begin{document}
\title{Energy-Efficient Small Cell with Spectrum-Power Trading}
\author{\IEEEauthorblockN{Qingqing Wu, \emph{Student Member, IEEE},  Geoffrey Ye Li, \emph{Fellow, IEEE},  \IEEEauthorblockN{Wen Chen},  \emph{Senior Member, IEEE}, and \IEEEauthorblockN{Derrick Wing Kwan Ng},  \emph{Member, IEEE}
\thanks{ Qingqing Wu and Geoffrey Ye Li are with the School of Electrical and Computer Engineering, Georgia Institute of Technology, USA, email:\{qingqing.wu, liye\}@ece.gatech.edu. Wen Chen  is with Department of Electronic Engineering, Shanghai Jiao Tong University, China, email: wenchen@sjtu.edu.cn. Derrick Wing Kwan Ng is with the School of Electrical Engineering and Telecommunications, The University of New South Wales, Australia, email: wingn@ece.ubc.ca. This paper has been submitted in part to the IEEE Globecom 2016.}}  }

\maketitle
\begin{abstract}
In this paper, we investigate spectrum-power trading between a small cell (SC) and a macro cell (MC), where the SC consumes power to serve the macro cell users (MUs) in exchange for some bandwidth from the MC.
Our goal is to maximize the system energy efficiency (EE) of the SC while guaranteeing the quality of service of each MU as well as small cell users (SUs). Specifically, given the minimum data rate requirement and the bandwidth provided by the MC, the SC jointly optimizes MU selection, bandwidth allocation, and power allocation while guaranteeing its own minimum required system data rate. The problem is challenging due to the binary MU selection variables and the fractional-form objective function. We first show that the bandwidth of an MU is shared with at most one SU in the SC. Then, for a given MU selection, the optimal bandwidth and power allocation is obtained by exploiting the fractional programming. To perform MU selection, we first introduce the concept of the trading EE to characterize the data rate obtained as well as the power consumed for serving an MU.  We then reveal a sufficient and necessary condition for serving an MU without considering the total power constraint and the minimum data rate constraint: the trading EE of the MU should be higher than the system EE of the SC. Based on this insight, we propose a low complexity MU selection method and also investigate the optimality condition.
 Simulation results verify our theoretical findings and demonstrate that the proposed resource allocation achieves  near-optimal performance.

\end{abstract}

\begin{keywords}
Green offloading, small cell, spectrum-power trading, non-convex optimization.
\end{keywords}
\section{Introduction}
The fifth generation (5G) mobile networks are expected to provide ubiquitous ultra-high data rate services and seamless user experience across the whole system \cite{hu2014energy}. The concept of small cell (SC) networks, such as femtocells, has been recognized as a key technology that can significantly enhance the performance of 5G networks. The underlaying SCs enable the macro cells (MCs) to offload huge volume of data and large numbers of users \cite{han2015traffic}. In particular, the SC could help to serve some macro cell users (MUs) with high data rate requirements, especially when these MUs are far away from the MC base station (BS) \cite{shakir2013green}.
 Although the MUs offloading reduces the power consumption of MCs, additional power consumption is imposed to SCs that may degrade the quality of services (QoSs) of small cell users (SUs).  Therefore, motivating the SC  to serve MUs  is a critical problem,  especially when the SC BS does not belong to the same mobile operator with the MC BS \cite{han2013greening}.

  Meanwhile, the explosive growth of data hungry applications and various services has triggered a dramatic increase in energy consumption of wireless communications. Due to rapidly rising energy costs and tremendous carbon footprints \cite{ender15,niu2010cell,wu2012green,li2011energy,cui2004energy,wangxin2013}, energy efficiency (EE), measured in bits-per-joule, has attracted considerable attention as a new performance metric in both academia and industry \cite{qing15_wpcn_twc,li2014energy,ng2012energy1,ng2012energy3,ng2012energy2,cui2014optimal,sun2013energy,liu15,
  yiran2015,ramamonjison2015energy,han2014spectrum,sheenergy,ge2014energy}.
 Energy-efficient resource allocation has been studied in \cite{ng2012energy1} for a single cell with large numbers of base station antennas. Then, this work is extended into the context of physical layer security \cite{ng2012energy3}  and the multi-cell with  limited backhaul capacity  \cite{ng2012energy2}, respectively.
Subsequently, similar EE maximization problems are further investigated for example for relay \cite{cui2014optimal,sun2013energy},   full duplex \cite{liu15}, heterogenous \cite{yiran2015}, cognitive radio (CR) \cite{ramamonjison2015energy}, coordinated multi-point
(CoMP) transmission \cite{han2014spectrum}, and multi-input-multi-output orthogonal frequency
division multiplexing (MIMO-OFDM) \cite{sheenergy,ge2014energy} networks.
Furthermore, the authors in \cite{soh2013energy} propose a BS switching on-off scheme for heterogeneous cellular networks under a stochastic geometry model. It has been shown that significant power consumption is reduced after adopting strategic sleeping.

However, all these previous works ignored spectrum sharing or energy cooperation between the SC and the MC, which are expected to enhance the performances of both networks simultaneously. The notion of spectrum or energy cooperation has been recently pursued in \cite{xie2012energy,guo2014joint,cao2015cognitive,han2014enabling}.
 In \cite{xie2012energy},  energy-efficient resource allocation has been studied for heterogeneous cognitive radio networks with femtocells, where a cognitive BS  maximizes its profit by allocating the spectrum resource bought from the primary networks to the femtocells. However, only the spectrum disparity is exploited between these two communication networks. In \cite{guo2014joint}, joint energy and spectrum cooperation between two neighbouring cellular networks are considered to minimize the total costs on the pre-priced bandwidth and power given the QoS requirement. However, the monetary based spectrum sharing and energy cooperation are unable to capture the instantaneous characteristics of wireless channels \cite{cao2015cognitive,han2014enabling}.

  In this paper,  we study spectrum-power trading between an SC and an MC where the SC BS consumes additional power to serve MUs while the MC allows the SC BS to obtain additional bandwidth.   Specifically, the SC  BS splits the allocated bandwidth of an MU  into two parts. One part is allocated to meet the QoS of the MU and the other part can be utilized to serve its own SUs.
The spectrum-power trading is motivated by the following two observations. To serve the MUs that are far away from the MC BS,  the transmit power consumption limits the system performance rather than the bandwidth, since the MC BS generally operates in low signal-to-noise ratio (SNR) regimes \cite{tse2005fundamentals}. In contrast, for the SC, the bandwidth limits the system performance rather than the transmit power, since it generally works in high SNR regimes due to a small coverage \cite{tse2005fundamentals}. Thus, the spectrum-power trading in this paper will exploit the disparities between the MC networks and the SC networks from both the spectrum and the power perspectives.  Note that the spectrum-power trading is always beneficial to the MC by reducing its power consumption. Hence, we focus on how to enhance the performance of the SC.
   Although the spectrum-power trading  enables the SC to have higher data rate via seeking more bandwidth from the MUs, it also causes additional power consumption to the SC in order to serve the MUs. Thus, this may leave less transmit power for the SUs such that the SC operates in the low SNR regime. As a result, the power consumption becomes a critical problem.  In order to balance the power consumption and the achievable data rate, we adopt the system EE as the performance metric.

To ensure spectrum-power trading based EE, we need to address the following fundamental issues. First, when should the SC serve an MU?  For example,  if the required data rate of an MU is required too stringent but the bandwidth assigned to it is insufficient, it may not be beneficial for the SC to serve that MU.
 Second, how much bandwidth should be obtained and how much power should be utilized in order to achieve the maximum EE as well as guaranteeing the QoS of the MUs? This question arises because if the SC desires to seek more bandwidth from the MU, it has to transmit with a higher transmit power for this MU.
 However, this may in turn leave a lower transmit power for its own SUs and thereby lead to a lower system date rate as well as a  lower system EE. Thus, there exists a non-trivial spectrum and power tradeoff in the spectrum-power trading.
 These issues have never been investigated in previous works   \cite{qing15_wpcn_twc,li2014energy,ng2012energy1,ng2012energy2,ng2012energy3,cui2014optimal,liu15,sun2013energy,han2014spectrum,sheenergy,
yiran2015,ramamonjison2015energy,ge2014energy} and we will address them in this paper.
The main contributions are summarized as follows:
\begin{itemize}
\item We study spectrum-power trading between an SC and an MC where the SC consumes additional power to serve MUs in exchange for additional bandwidth from the MC. We focus on enhancing the performance of the SC. In particular, MU selection, bandwidth allocation, and power allocation are jointly optimized with the objective of  maximizing the system EE of the SC while guaranteeing the QoS of both networks.

\item  We first simplify the original optimization problem by showing that the bandwidth of an MU served by the SC is only shared with at most one SU. However, the simplified problem is still non-convex due to the binary MU selection variable and the fractional-form objective function. Given an MU selection, the problem can be further reduced to a joint bandwidth and power allocation problem, where the fractional form objective function is then transformed into a subtractive form by exploiting factional programming theory. We then derive closed-form expressions of the bandwidth and power allocation based on the analysis of the transformed problem.

  \item For the MU selection, we first introduce the trading EE of an MU that involves both the data rate brought for the SC and the power consumed by the SC in the spectrum-power trading. Then, we investigate the relationships between the trading EE of an MU and the system EE of the SC subject to various constraints in the original problem formulation.  In particular, we reveal that in the absence of the maximum power constraint and the minimum system data rate constraint, serving an MU can improve the system EE of the SC if and only if its trading EE is higher than the current system EE.
      Based on this observation, we develop a low computational complexity algorithm for the MU selection. Finally, we also study the optimality condition of the proposed algorithm for the original problem.
 \end{itemize}

The rest of this paper is organized as follows. Section II  introduces the spectrum-power trading model as well as the power consumption model. In Section III, we formulate and analyze the EE maximization problem. In Section IV, we study joint bandwidth and power allocation with a given MU selection. In Section V, we investigate the MU selection based on the proposed trading EE. Section VI provides simulation results to demonstrate the effectiveness of the proposed algorithm. Finally, Section VII concludes the paper.


\section{System Model}
\begin{figure}[t]
\centering
\includegraphics[width=4in]{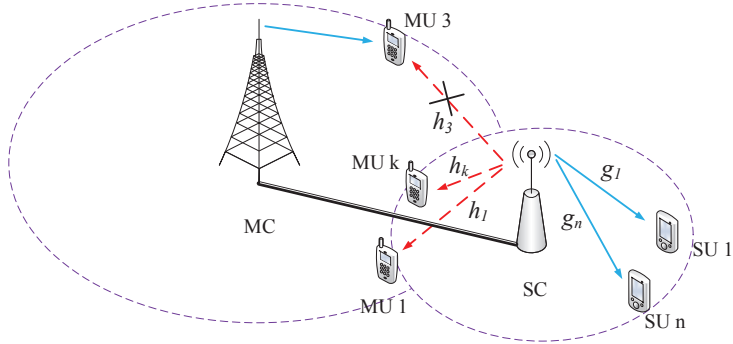}
\caption{The spectrum-power trading model between an SC and an MC.  For example, the SC may agree to serve MU 1 but refuse to serve MU 3 in order to maximize its performance. }\label{system_model}
\end{figure}
In this section, we first introduce the spectrum-power trading model between SC and MC networks. Then, we discuss the power consumption model of the SC BS under the context of spectrum-power trading. 
\subsection{Spectrum-Power Trading Model between SC and MC}

 We consider a spectrum-power trading scenario which consists of an MC and an SC, as depicted in Figure \ref{system_model}. The MC BS aims at offloading the data traffic of some cell edge MUs to the SC BS in order to reduce its own power consumption.  The set of MUs who may be served by the SC is denoted by $\mathcal{K}$ with $|\mathcal{K}|=K$ and the set of  SUs in the SC is denoted by $\mathcal{N}$ with $|\mathcal{N}|=N$, where $|\cdot|$ indicates the cardinality of a set. Each MU and SU have been assigned a licensed bandwidth by the MC and the SC, respectively, denoted as $W^k_{MC}$ and $B^n_{SC}$. To incentivize the SC to serve MUs, the MC allows it to utilize some of the licensed bandwidth of MUs to enhance the QoS of SUs. Thus, for the SC,  the bandwidth obtained from MUs can be viewed as a compensation of the power consumed for serving MUs.
To simplify the problem,  we assume that the SC BS as well  as each user is equipped with a single antenna \cite{guo2014joint}.


The channels between the SC and MUs as well as SUs are assumed to be quasi-static block fading, i.e., the channel coefficients remain constant during each block, but may vary from one block to another \cite{yumulti15}. We also assume that SU $n$, $\forall\, n\in \mathcal{N}$, experiences frequency flat fading on its own licensed bandwidth $B^n_{SC}$ and each MU $k$'s bandwidth $W^k_{MC}$, respectively. In addition,  MU $k$, $\forall\, k\in \mathcal{K}$, also experiences frequency flat fading on its own licensed bandwidth $W^k_{MC}$. Note that the results in this paper can also be extended to the more general case when the bandwidth of each user (SU and MU) is modeled by the multiple orthogonal subcarriers.
It is also assumed that the channel state information (CSI) of all users is perfectly known to the SC in order to explore the EE upper bound and extract useful design insights of the considered systems. In practice, the CSI can be estimated by each individual user and then fed back to the SC. Signaling overhead and imperfect CSI would result in performance loss and their impacts can be analyzed as in  [25], which is beyond the scope of this paper.


For MU $k$, $\forall\, k\in \mathcal{K}$, the channel power gain between the SC  and MU $k$ on its own licensed bandwidth $W_{MC}^k$ is denoted as $h_k$, cf. Figure \ref{system_model}. The corresponding transmit power and the bandwidth that are allocated to MU $k$ by the SC are denoted as $q_{k}$ and $w_k$, respectively.
Thus, the achievable data rate of MU $k$ can be expressed as
\begin{align}
r_k=w_k\log_2\left(1+\frac{q_kh_k}{w_kN_0}\right),
\end{align}
where $N_0$ is the spectral density of the additive white Gaussian noise.

For SU $n$, $\forall\, n\in \mathcal{N}$, the channel power gain between the SC  and SU $n$ on its own licensed bandwidth $B^n_{SC}$ is denoted as $g_{n}$, cf. Figure \ref{system_model}. The corresponding transmit power is denoted as $p_n$. Then, the achievable date rate of SU $n$ on its own bandwidth can be expressed as
\begin{align}
r^n_{SC}=B^n_{SC}\log_2\left(1+\frac{p_{n}g_{n}}{B^n_{SC}N_0}\right).
\end{align}
In addition to $B^n_{SC}$, each SU may obtain some additional bandwidth from MUs due to the spectrum-power trading between the SC and the MC.
Denote the channel power gain between the SC and SU $n$ on the bandwidth of MU $k$ as $g_{k,n}$. The bandwidth that the SC allocates for SU $n$ from $W^k_{MC}$ is denoted as as $b_{k,n}$ and the corresponding transmit power is denoted as $p_{k,n}$.
 Then, the achievable data rate of SU $n$ on the bandwidth of MU $k$ can be expressed as
\begin{align}
r_{k,n} = b_{k,n} \log_2\left(1+\frac{p_{k,n}g_{k,n}}{b_{k,n}N_0}\right).
\end{align}
Thus, the total data rate of SU $n$ in the context of the spectrum-power trading is given by
\begin{align}
R_n=B^n_{SC}\log_2\left(1+\frac{p_{n}g_{n}}{B^n_{SC}N_0}\right) + \sum_{k=1}^Kx_kb_{k,n} \log_2\left(1+\frac{p_{k,n}g_{k,n}}{b_{k,n}N_0}\right),
\end{align}
where $x_k$ is the MU selection variable and defined as
\begin{equation}\label{eq5}
x_k=\left\{
\begin{array}{lcl}
1,& \text{if MU $k$ is served by the SC},\\
0,& \text{otherwise}.
\end{array}\right.
\end{equation}
Therefore, the overall system data rate of SUs can be expressed as
\begin{align}\label{eq6}
R_{\rm{total}} =  \sum_{n=1}^{N}R_n = \sum_{n=1}^Nr^n_{SC} + \sum_{n=1}^N\sum_{k=1}^Kx_kr_{k,n}.
\end{align}

\subsection{Power Consumption Model for SC BS}
 Here, we adopt the power consumption model from \cite{ismailsurvey14} in which the overall energy consumption of the BS consists of two parts: the dynamic power consumed in the power amplifier for transmission, $P_{\rm{t}}$, and the static power consumed for circuits, $P_{\rm{c}}$.

  The dynamic power consumption is modeled as a linear function of the transmit power that includes both the transmit power consumption for SUs and that for MUs, i.e.,
 \begin{align}\label{eq7}
P_{\rm{t}}=\sum_{n=1}^{N}\frac{p_{n}}{\xi} + \sum_{n=1}^{N}\sum_{k=1}^{K}x_k\frac{p_{k,n}}{\xi}+\sum_{k=1}^{K}x_k\frac{q_{k}}{\xi},
\end{align}
where $\xi\in (0,1]$  is a constant that accounts for the power amplifier (PA) efficiency and the value of  $\xi$ depends on the specific type of the BS. In general, the PA efficiency decreases for smaller BS types and a detailed discussion on it can be found in \cite{auer2010d2}.
The static power consumption for circuits is denoted as  $P_{\rm{c}}$, which is caused by filters, frequency synthesizer, etc.  Therefore, the overall power consumption of the SC BS can be expressed as
\begin{align}\label{eq7}
P_{\rm{total}}=\sum_{n=1}^{N}\frac{p_{n}}{\xi} + \sum_{n=1}^{N}\sum_{k=1}^{K}x_k\frac{p_{k,n}}{\xi}+\sum_{k=1}^{K}x_k\frac{q_{k}}{\xi}+P_{\rm{c}}.
\end{align}

\section{Problem Formulation and Analysis}
    Our goal is to enhance the system EE of the SC in the context of the spectrum-power trading while guaranteing the QoS of the MUs as well as the SC network. Thus, the system EE of the SC is defined as the ratio of the total achievable data rate of SUs and the total power consumption that includes not only the power consumed for provding services for SUs, but also the power consumed for spectrum-power trading, i.e.,
\begin{align}\label{eq5}
EE=\frac{R_{\rm{total}}}{P_{\rm{total}}}=\frac{\sum_{n=1}^Nr^n_{SC} + \sum_{k=1}^K\sum_{n=1}^Nx_kr_{k,n}}
{\sum_{n=1}^{N}\frac{p_{n}}{\xi} + \sum_{n=1}^{N}\sum_{k=1}^{K}x_k\frac{p_{k,n}}{\xi}+\sum_{k=1}^{K}x_k\frac{q_{k}}{\xi}+P_{\rm{c}}}.
\end{align}
Specifically, we aim to maximize the system EE of the SC via jointly optimizing MU selection, bandwidth allocation, and power allocation. The system EE maximization problem is formulated as
\begin{align}\label{eq16}
\mathop {\text{maximize} }\limits_{\overset{\{p_{n}\}, \{p_{k,n}\},\{b_{k,n}\},}{\{x_{k}\}, \{q_k\}, \{w_k\}}} ~~~& \frac{\sum_{n=1}^Nr^n_{SC} + \sum_{k=1}^K\sum_{n=1}^Nx_kr_{k,n}}
{\sum_{n=1}^{N}\frac{p_{n}}{\xi} + \sum_{n=1}^{N}\sum_{k=1}^{K}x_k\frac{p_{k,n}}{\xi}+\sum_{k=1}^{K}x_k\frac{q_{k}}{\xi}+P_{\rm{c}}}\nonumber \\
\text{s.t.} ~~~~~~~~&  \text{C1:}~~\sum_{n=1}^{N}{p_{n}} +  \sum_{n=1}^N\sum_{k=1}^Kp_{k,n}+\sum_{k=1}^{K}q_k\leq P^{SC}_{\mathop{\max}}, \nonumber \\
& \text{C2:}~~\sum_{n=1}^Nb_{k,n}+w_k\leq x_kW^k_{MC},  ~\forall\, k\in\mathcal{ K}, \nonumber \\
&\text{C3:}~~w_{k}\log_2\left(1+\frac{q_{k}h_{k}}{w_{k}N_0}\right)\geq x_kR^k_{MC}, ~\forall\, k\in\mathcal{ K},  \nonumber \\
& \text{C4:}~~\sum_{n=1}^Nr^n_{SC} + \sum_{k=1}^K\sum_{n=1}^Nx_kr_{k,n} \geq R^{SC}_{\mathop{\min}}, \nonumber\\
& \text{C5:}~~x_k\in\{0, 1\}, ~\forall\,  k\in\mathcal{ K},    \nonumber\\
& \text{C6:}~~b_{k,n}\geq  0, ~  w_{k}\geq  0, ~\forall\, k\in\mathcal{ K}, n\in\mathcal{N},    \nonumber\\
& \text{C7:}~~p_n\geq0,~ p_{k,n}\geq0, ~  q_{k}\geq0, ~\forall\,  k\in\mathcal{ K}, n\in\mathcal{N}.
\end{align}
In problem (\ref{eq16}),  C1 limits the maximum transmit power of the SC BS to $P^{SC}_{\mathop{\max}}$.
C2 ensures that the bandwidth allocated to SUs and MU $k$ does not exceed the available bandwidth, $W^k_{MC}$, that has been licensed to MU $k$ by the MC.
In C3, $R^k_{MC}$ is the minimum data rate requirement of MU $k$.  C4 guarantees the minimum required system data rate of the SC.
C5 indicates whether to serve MU $k$ or not. Note that if $x_k=0$, then from C2 and C4, both $b_{k,n}$ and $q_k$ will be forced to be zeros at the optimal solution of problem  (\ref{eq16}), which means that the SC does not obtain additional bandwidth from MU $k$ and does not serve MU $k$ either.
 C6 and C7 are non-negativity constraints on the bandwidth and power allocation variables, respectively.  In general, different priorities and fairness among the SUs could be realized by adopting the weighted sum rate instead of the sum rate in problem (\ref{eq16}) \cite{ng2012energy1,derrick_harvest13}.  Since the weights do not affect the algorithm design, we assume that all the SUs are equally weighted in this paper without loss of generality.
%
{\begin{remark}
Although we focus on improving the EE of the SC via spectrum-power trading, the EE of the MC as well as the system-wide EE will also be improved correspondingly, which can be explained as follows.
 Note that the MUs that the MC is willing to offload to the SC are in general those users with poor channel conditions or cell-edge users. This means that a large amount of transmit power will be consumed if these MUs are directly served by the MC. In fact, this is also the fundamental reason why the MC desires to offload them.  That is, through offloading, the MC only needs to serve MUs with good channel conditions, which thus results in a higher system EE. In other words, if an MU can be served by the MC with a small amount of transmit power, there is no motivation for the MC to establish the offloading. Therefore, the EE of the MC will obviously increase via offloading.  Therefore, the system-wide EE will increase due to the EE increases of both the SC and the MC.
\end{remark}}

\begin{remark}
It is worth noting that problem (\ref{eq16}) generalizes several interesting special cases which are discussed as follows.
\begin{itemize}
  \item If we set $x_k=0, \forall\, k\in \mathcal{K}$, then problem (\ref{eq16}) is reduced to a system EE maximization probem without spectrum-power trading.
  \item If we set $x_k=1,\, \forall\, k\in \mathcal{K}$, it suggests that the SC helps to provide services for all of the MUs without considering its own performance, which usually happens when the SC BS and the MC BS belongs to the same operator.
  \item If we set $B^n_{SC}=0, \forall\, n\in \mathcal{N}$, it implies that the SC does not have its own licensed bandwidth to assign to SUs and can only seek the bandwidth from the MC via the spectrum-power trading.
In this case, the SC is reduced to \emph{a cognitive (secondary) network} while the MC can be regarded as \emph{a primary network}  \cite{cao2015cognitive}.
\end{itemize}
Therefore,  problem (\ref{eq16}) is more challenging and  more interesting than the previous work \cite{han2014enabling}.
\end{remark}

Note that problem (\ref{eq16}) is neither a concave nor  a quasi-concave optimization problem due to the fractional-form objective function and the binary optimization variables $x_k, \forall\, k$. 
Nevertheless, in the following theorem proved in Appendix A, we first transform the energy-efficient optimization problem into a simplified one based on its special structure.
\begin{theorem}
The optimal solution of problem (\ref{eq16}) is equivalent to that of the following problem
\begin{align}\label{eq17}
\mathop {\text{maximize} }\limits_{\overset{\{p_{n}\}, \{p_{k,k'}\},\{b_{k,k'}\},}{\{x_{k}\}, \{q_k\}, \{w_k\}}} ~~~& \frac{\sum_{n=1}^Nr^n_{SC} + \sum_{k=1}^Kx_kr_{k,k'}}
{\sum_{n=1}^{N}\frac{p_{n}}{\xi} + \sum_{k=1}^{K}x_k\frac{p_{k,k'}}{\xi}+\sum_{k=1}^{K}x_k\frac{q_{k}}{\xi}+P_{\rm{c}}}\nonumber \\
\text{s.t.} ~~~~~~~~&  \text{C5}, ~ \text{C6},~ \text{C7},~  \nonumber \\
& \text{C1:}~~ \sum_{n=1}^{N}{p_{n}} + \sum_{k=1}^Kp_{k,k'}+\sum_{k=1}^{K}q_k\leq P^{SC}_{\mathop{\max}}, \nonumber \\
& \text{C2:}~~b_{k,k'}+w_k= x_kW^k_{MC},  ~\forall\, k\in\mathcal{ K}, \nonumber \\
&\text{C3:}~~w_{k}\log_2\left(1+\frac{q_{k}h_{k}}{w_{k}N_0}\right)= x_kR^k_{MC}, ~\forall\, k\in\mathcal{ K},  \nonumber \\
& \text{C4:}~~\sum_{n=1}^Nr^n_{SC} + \sum_{k=1}^Kx_kr_{k,k'} \geq R^{SC}_{\mathop{\min}}, 
\end{align}
where $k'=\arg\mathop {\max }\limits_{n \in \mathcal{N}} ~g_{k,n}.$
\end{theorem}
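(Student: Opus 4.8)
The plan is to prove that (\ref{eq16}) and (\ref{eq17}) have the same optimal value and that their optimal solutions correspond, by regarding (\ref{eq17}) as a restriction of (\ref{eq16}) that nevertheless loses nothing. One inclusion is immediate: a point feasible for (\ref{eq17}) extends to a feasible point of (\ref{eq16}) by setting $b_{k,n}=p_{k,n}=0$ for all $n\neq k'$, with the objective value unchanged, so the optimum of (\ref{eq17}) is no larger than that of (\ref{eq16}). The content lies in the reverse inequality, which I would get by taking an arbitrary optimal solution of (\ref{eq16}) and transforming it, in stages, into a point feasible for (\ref{eq17}) without ever decreasing $R_{\rm total}$ or increasing $P_{\rm total}$, hence without decreasing the EE; since the transformed point is still feasible for (\ref{eq16}), its EE cannot exceed the optimum of (\ref{eq16}) either, so it is optimal for both problems and the optima coincide.

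First I would show C3 may be taken with equality: for a served MU $k$, holding $w_k$ fixed and lowering $q_k$ until $w_k\log_2(1+q_kh_k/(w_kN_0))=R^k_{MC}$ strictly decreases $P_{\rm total}$ and touches neither $R_{\rm total}$ (the MU's own rate does not enter the objective or C4) nor C1, C2, C4 — so at an optimum C3 must already hold with equality. Next I would show C2 may be taken with equality: if $\sum_n b_{k,n}+w_k<W^k_{MC}$ for a served $k$, transfer the leftover bandwidth entirely to $b_{k,k'}$ with all power variables fixed; since $b\mapsto b\log_2(1+c/b)$ is strictly increasing for any fixed $c>0$ (its derivative is $\frac{1}{\ln 2}\big(\ln(1+c/b)-\frac{c/b}{1+c/b}\big)>0$), this does not decrease $R_{\rm total}$, leaves $P_{\rm total}$, C1 and C3 alone, and only helps C4. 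For an unserved MU, C2 together with C6 already forces its allocated bandwidth to zero, so C2 is automatically an equality there.

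The crux is the reduction to one SU per MU band. Fix a served MU $k$ and set $B_k=\sum_n b_{k,n}$ and $Q_k=\sum_n p_{k,n}$, the total bandwidth and power the SC spends on SUs over band $k$; these are now fixed, and the denominator of the EE depends on band $k$ only through $Q_k$. I would bound the band-$k$ numerator contribution in two moves: (i) since $g_{k,n}\le g_{k,k'}$ for every $n$ (as $k'$ is the SU with the largest gain on band $k$), replacing each $g_{k,n}$ by $g_{k,k'}$ does not decrease $\sum_n b_{k,n}\log_2(1+p_{k,n}g_{k,n}/(b_{k,n}N_0))$; and (ii) the map $(b,p)\mapsto b\log_2(1+pg_{k,k'}/(bN_0))$ is the perspective of the concave function $p\mapsto\log_2(1+pg_{k,k'}/N_0)$, hence jointly concave and positively homogeneous of degree one (with value $0$ at $b=0$), hence superadditive, which gives $\sum_n b_{k,n}\log_2(1+p_{k,n}g_{k,k'}/(b_{k,n}N_0))\le B_k\log_2(1+Q_kg_{k,k'}/(B_kN_0))$. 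The right-hand side is precisely the rate attained by $b_{k,k'}=B_k$, $p_{k,k'}=Q_k$, $b_{k,n}=p_{k,n}=0$ for $n\neq k'$. Doing this for every served $k$ yields a point feasible for (\ref{eq17}) whose EE is at least that of the starting optimal solution, which closes the argument.

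I expect move (ii) to be the main obstacle: one has to recognize the per-band SU rate as a perspective function to obtain joint concavity/superadditivity, and to apply the inequality while cleanly handling the boundary case where some $b_{k,n}=0$ (so that term, and its $p_{k,n}$, drop out). The rest — the restriction direction and the one-variable monotonicities in $q_k$ and in $b$ — is routine.
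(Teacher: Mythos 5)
Your proposal is correct and follows essentially the same route as the paper's proof: the consolidation onto the best SU $k'$ rests on exactly the two inequalities the paper uses (the max-gain substitution $g_{k,n}\le g_{k,k'}$ and the concavity/superadditivity of the perspective function $b\log_2(1+pg/(bN_0))$), and the C2/C3 equality arguments are the same monotonicity observations. The only difference is presentational — you transform an optimal point directly and handle all SUs at once via superadditivity, whereas the paper merges two SUs at a time and argues by contradiction — which changes nothing of substance.
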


Theorem 1 suggests that if the SC decides to serve MU $k$, the most energy-efficient strategy is only to share the bandwidth of MU $k$ with one SU who has the largest channel power gain on the traded bandwidth, $W^k_{MC}$. In addition, constraints C2 and C3 are also met with equalities at the optimal solution since it is always beneficial for the SC to seek as much as bandwidth while consuming as less as transmit power in the spectrum-power trading with the MC.
With Theorem 1, we only need to focus on solving problem  (\ref{eq17}) in the rest of the paper.
Although problem (\ref{eq17}) is more tractable than problem  (\ref{eq16}), it is still a combinatorial non-convex optimization problem. In general, there is no efficient method for this problem and the exhaustive search among all the possible cases leads to an exponential computational complexity, which is prohibitive in practice. Thus,  we aim to develope a low complexity approach via exploiting the special structure of the problem.

\section{Energy-Efficient Resource Allocation For Given MU Selection}
Denote $\Psi$ as a set of MUs that are scheduled by the SC, i.e., $\Psi\triangleq\{k\,|\,x_k=1, k\in \mathcal{K}\}$, and denote $EE_{\Psi}$ as the maximum system EE of problem (\ref{eq17}) based on set $\Psi$, i.e., $EE=EE_{\Psi}$. For a given $\Psi$, problem (\ref{eq17}) is reduced to a joint bandwidth and power allocation problem.
However, the reduced problem is still non-convex due to the fractional-form objective function.
In the following, we show that the optimal solution of the reduced problem can be efficiently obtained by exploiting the fractional structure of the objective function in (\ref{eq17}).
\subsection{Problem Transformation}
According to the nonlinear fractional programming theory \cite{dinkelbach1967nonlinear}, for a problem of the form,
\begin{equation}\label{eq11c}
q^*= \mathop {\text{maximize} }\limits_{S\in \mathcal{F}} \frac{R_{\rm{\rm{total}}}(S)}{P_{\rm{\rm{total}}}(S)},
\end{equation}
where $S$ is a feasible solution and $\mathcal{F}$ is the corresponding feasible set,
there exists an equivalent problem in  subtractive form that satisfies
\begin{equation}\label{eq12c}
T(q^*)=\mathop {\text{maximize} } \limits_{S\in \mathcal{F}}\Big\{R_{\rm{\rm{total}}}(S)-q^*P_{\rm{\rm{total}}}(S)\Big\}=0.
\end{equation}
The equivalence between (\ref{eq11c}) and (\ref{eq12c}) can be easily verified with the corresponding maximum value $q^*$ that is also the maximum system EE. Besides, Dinkelbach
 provides an iterative  method in \cite{dinkelbach1967nonlinear} to obtain  $q^*$. Specifically,  for a given $q$, we solve a maximization problem with the subtractive-form objective function as (\ref{eq12c}). The value of $q$ is then updated and problem (\ref{eq12c}) is solved again in the next iteration until convergence.
  By applying this transformation to (\ref{eq17}) with  $b_{k,k'} = W^k_{MC}- w_k$ and $q_{k}=\left(2^\frac{R^{k}_{MC}}{w_k}-1\right)\frac{w_{k}N_0}{h_k}, \forall \, k \in \Psi$, we obtain the following optimization problem for a given $q$ in each iteration
\begin{align}\label{eq13c1}
\mathop {\text{maximize} }\limits_{\{p_{n}\}, \{p_{k,k'}\}, \{w_k\}}~~~ &\sum_{n=1}^NB^n_{SC}\log_2\left(1+\frac{p_{n}g_{n}}{B^n_{SC}N_0}\right) + \sum_{k\in\Psi}(W^k_{MC}-w_k)\log_2\left(1+\frac{p_{k,k'}g_{k,k'}}{(W^k_{\max}-w_k)N_0}\right) \nonumber\\
&-q\left(\sum_{n=1}^{N}\frac{p_{n}}{\xi} + \sum_{k\in \Psi}\frac{p_{k,k'}}{\xi}+\sum_{k\in \Psi}\left(2^\frac{R^{k}_{MC}}{w_k}-1\right)\frac{w_{k}N_0}{\xi h_k}+P_{\rm{c}}\right)\nonumber\\
\text{s.t.} ~~~~~~~~& \text{C1:}~~ \sum_{n=1}^{N}{p_{n}} + \sum_{k\in \Psi}p_{k,k'}+\sum_{k\in \Psi}\left(2^\frac{R^{k}_{MC}}{w_k}-1\right)\frac{w_{k}N_0}{h_k}\leq P^{SC}_{\mathop{\max}}, \nonumber \\
& \text{C4:}~~\sum_{n=1}^Nr^n_{SC} + \sum_{k\in \Psi}r_{k,k'} \geq R^{SC}_{\mathop{\min}}, ~~~ {\text{C6:}~~W^{k}_{MC} \geq w_{k}\geq  0, ~\forall\, k\in \Psi,}    \nonumber\\ 
& \text{C7:}~~p_n\geq0, ~p_{k,k'}\geq0, ~\forall\,  k\in\Psi, n\in\mathcal{N}.
\end{align}
After the transformation, it can be verified that problem (\ref{eq13c1}) is jointly concave with respect to all optimization variables and also satisfies Slater's constraint qualification \cite{Boyd}.  As a result, the duality gap between problem (\ref{eq13c1}) and its dual problem is zero, which means that the optimal solution of problem (\ref{eq13c1}) can be obtained by applying the Lagrange duality theory \cite{ng2012energy3}. In the next section, we will derive the optimal bandwidth and power allocation via exploiting the Karush-Kuhn-Tucker (KKT) conditions of problem (\ref{eq13c1}) that leads to a computationally efficient algorithm.

\subsection{Joint Bandwidth and Power Allocation}
The partial  Lagrangian function of  problem (\ref{eq13c1}) can be written as
\begin{align}\label{eq161}
~\,~&~\mathcal{L}(p_{n}, p_{k,k'}, w_k, \lambda, \mu)  \nonumber \\
=\,&\sum_{n=1}^NB^n_{SC}\log_2\left(1+\frac{p_{n}g_{n}}{B^n_{SC}N_0}\right) + \sum_{k\in\Psi}(W^k_{MC}-w_k)\log_2\left(1+\frac{p_{k,k'}g_{k,k'}}{(W^k_{MC}-w_k)N_0}\right) \nonumber\\
-\,&q\left(\sum_{n=1}^{N}\frac{p_{n}}{\xi} + \sum_{k\in \Psi}\frac{p_{k,k'}}{\xi}+\sum_{k\in \Psi}\left(2^\frac{R^{k}_{MC}}{w_k}-1\right)\frac{w_{k}N_0}{ \xi h_k}+P_{\rm{c}}\right)\nonumber\\
+\,& \lambda\left(P^{SC}_{\mathop{\max}} -\sum_{n=1}^{N}{p_{n}} - \sum_{k\in \Psi}{p_{k,k'}}-\sum_{k\in \Psi}\left(2^\frac{R^{k}_{MC}}{w_k}-1\right)\frac{w_{k}N_0}{ h_k}\right) \nonumber \\
+\,&   \mu\left(\sum_{n=1}^NB^n_{SC}\log_2\left(1+\frac{p_{n}g_{n}}{B^n_{SC}N_0}\right) + \sum_{k\in\Psi}(W^k_{MC}-w_k)\log_2\left(1+\frac{p_{k,k'}g_{k,k'}}{(W^k_{MC}-w_k)N_0}\right) -R^{SC}_{\mathop{\min}}\right),
\end{align}
where  $\lambda$ and  $  \mu$ are the non-negative Lagrange multipliers associated with  constraints C1 and C4, respectively.  The boundary constraints C6 and C7 are absorbed into the optimal solution in the following.
Then, from Appendix B, the optimal solution can be obtained as in Theorem \ref{theorem70}.
 \begin{theorem}\label{theorem70}
Given  $\lambda$ and ${\mu}$,  the optimal bandwidth and power allocation of maximizing the Lagrangian function, $\mathcal{L}$, is given by
\begin{align}
w_{k}&= \min\left(\frac{R^k_{MC}\ln2}{\mathcal{W}\left(\frac{1}{e}\left(\frac{\mathcal{C}h_k}{(q+\lambda)N_0}-1\right)\right)+1}, W^k_{MC}\right), \forall \, k \in \Psi,  \label{eq4.22}\\
p_{k,k'}&=(W^k_{MC}-w_k)\left[\frac{(1+\mu)\xi}{({q}+\lambda\xi)\ln2}-\frac{N_0}{g_{k,k'}}\right]^+, \forall \, k \in \Psi,  \label{eq4.25}\\
p_{n}&=B^n_{SC}\left[\frac{(1+\mu)\xi}{({q}+\lambda\xi)\ln2}-\frac{N_0}{g_{n}}\right]^+, \forall \, n\in \mathcal{N},  \label{eq4.26}
\end{align}
where $[x]^+\triangleq \max\{x,0\}$ and $\mathcal{W}(x)$ is the Lambert $\mathcal{W}$ function \cite{corless1996lambertw},  i.e., $x=\mathcal{W}(x)e^{\mathcal{W}(x)}$. In addition, $\mathcal{C}= (1+\mu)\log_2\left(1+{\widetilde{p}}_{k,k'}\frac{{g_{k,k'}}}{N_0}\right)-\left(\frac{q}{\xi}+\lambda\right)\widetilde{p}_k$, and ${\widetilde{p}}_{k,k'}=\left[\frac{(1+\mu)\xi}{({q}+\lambda\xi)\ln2}-\frac{N_0}{g_{k,k'}}\right]^+$.

 \end{theorem}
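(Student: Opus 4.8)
The plan is to solve the inner maximization by its Karush--Kuhn--Tucker (KKT) conditions. This is legitimate because, as established just before the statement, problem~(\ref{eq13c1}) is jointly concave and satisfies Slater's condition; hence for fixed multipliers $\lambda,\mu\ge 0$ the problem $\max\,\mathcal{L}$ with $\mathcal{L}$ as in~(\ref{eq161}) is a concave program whose only remaining constraints are the box constraints C6 and C7, so vanishing of the gradient in the interior together with the standard projection/complementary-slackness conditions is both necessary and sufficient. First I would compute $\partial\mathcal{L}/\partial p_{n}$ and $\partial\mathcal{L}/\partial p_{k,k'}$. Each has the same ``marginal rate minus marginal price'' structure $\tfrac{(1+\mu)\,g}{(1+\mathrm{SNR})\ln 2}-\big(\tfrac{q}{\xi}+\lambda\big)$; equating to zero and clipping at zero to respect C7 gives the water-filling expressions~(\ref{eq4.25}) and~(\ref{eq4.26}). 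The crucial point to record here is that the resulting \emph{per-Hertz} power $\widetilde p_{k,k'}=p_{k,k'}/(W^k_{MC}-w_k)=\big[\tfrac{(1+\mu)\xi}{(q+\lambda\xi)\ln2}-\tfrac{N_0}{g_{k,k'}}\big]^+$ does not depend on $w_k$; this decoupling is what makes the remaining optimization over $w_k$ one-dimensional and explicit.

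Next I would substitute the optimal $p_{k,k'}$ back into $\mathcal{L}$ (equivalently, invoke the envelope identity $\partial\mathcal{L}/\partial p_{k,k'}=0$, so $w_k$ may be differentiated with $p_{k,k'}$ held fixed). After this substitution the SU-rate term $(W^k_{MC}-w_k)\log_2\!\big(1+\widetilde p_{k,k'}g_{k,k'}/N_0\big)$ together with the two power terms it generates (in the objective through $q/\xi$ and in C1 through $\lambda$) becomes \emph{affine} in $w_k$, with slope exactly $-\mathcal{C}$ for $\mathcal{C}=(1+\mu)\log_2\!\big(1+\widetilde p_{k,k'}g_{k,k'}/N_0\big)-\big(\tfrac{q}{\xi}+\lambda\big)\widetilde p_{k,k'}$; this is precisely what motivates the definition of $\mathcal{C}$ in the statement. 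The only nonlinear dependence on $w_k$ that survives is the MU transmit-power term $-\big(\tfrac{q}{\xi}+\lambda\big)\big(2^{R^k_{MC}/w_k}-1\big)\tfrac{w_kN_0}{h_k}$. A one-line computation gives $\tfrac{d^2}{dw_k^2}\big[(2^{R^k_{MC}/w_k}-1)w_k\big]=2^{R^k_{MC}/w_k}(R^k_{MC}\ln 2)^2/w_k^3>0$, so $\mathcal{L}$ is strictly concave in $w_k$ and its unique maximizer on $(0,\infty)$ is the stationary point, which justifies searching only for that point.

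Finally I would write the stationarity equation $\partial\mathcal{L}/\partial w_k=0$, which reduces to $\mathcal{C}=\big(\tfrac{q}{\xi}+\lambda\big)\tfrac{N_0}{h_k}\big[1-2^{R^k_{MC}/w_k}\big(1-\tfrac{R^k_{MC}\ln 2}{w_k}\big)\big]$, and then set $t=\tfrac{R^k_{MC}\ln2}{w_k}$ so that $2^{R^k_{MC}/w_k}=e^{t}$ and the equation collapses to $(t-1)e^{t-1}=\tfrac{1}{e}\big(\tfrac{\mathcal{C}h_k}{(q/\xi+\lambda)N_0}-1\big)$, which is exactly the defining relation of the Lambert $\mathcal{W}$ function; solving $t-1=\mathcal{W}(\cdot)$ and back-substituting $w_k=R^k_{MC}\ln2/t$ yields~(\ref{eq4.22}). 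It then remains to re-impose C6: since C3 forces $w_k>0$ whenever $x_k=1$ (because $R^k_{MC}>0$), the lower bound is never active, while the upper bound is enforced by the outer $\min(\cdot,W^k_{MC})$, the concavity of $\mathcal{L}$ in $w_k$ guaranteeing that $w_k=W^k_{MC}$ is optimal whenever the unconstrained root exceeds $W^k_{MC}$. I expect this last step to be the main obstacle: the transcendental stationarity equation must be regrouped carefully to expose the Lambert-$\mathcal{W}$ form, one must select the correct ($W_0$) branch so that the returned root is the concave maximizer rather than a spurious solution, and the corner cases introduced by the $[\cdot]^+$ operators need separate checking (e.g., when $\widetilde p_{k,k'}=0$ the slope $\mathcal{C}$ and hence~(\ref{eq4.22}) degenerate, and one must verify that the clipped solutions still satisfy the full KKT system).
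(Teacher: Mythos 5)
Your proposal follows essentially the same route as the paper's Appendix~B: set $\partial\mathcal{L}/\partial p_n=\partial\mathcal{L}/\partial p_{k,k'}=0$ to obtain the water-filling powers~(\ref{eq4.25})--(\ref{eq4.26}), substitute the resulting $w_k$-independent per-Hertz power $\widetilde p_{k,k'}$ back into $\partial\mathcal{L}/\partial w_k$, and solve the resulting one-variable stationarity equation (the paper's Eq.~(\ref{eq31})), with your explicit substitution $t=R^k_{MC}\ln 2/w_k$ supplying the Lambert-$\mathcal{W}$ step that the paper asserts without detail, and your concavity check in $w_k$ playing the role of the paper's monotonicity argument for uniqueness of the root. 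Your derivation is correct, and it in fact shows that the argument of $\mathcal{W}$ should involve $\bigl(\tfrac{q}{\xi}+\lambda\bigr)N_0$ as in Eq.~(\ref{eq31}) rather than the $(q+\lambda)N_0$ printed in the theorem statement, which appears to be a typographical slip.
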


From (\ref{eq4.22}),  it is easy to show that the bandwidth allocated to MU $k$ by the SC, i.e., $w_k$, increases with its minimum required data rate by the MC, $R^{k}_{MC}$, while decreasing with its channel power gain, $h_k$.  This implies that the SC is able to seek more bandwidth from the MUs who require lower user data rates but are closer to the SC BS, which also coincides with the intuition discussed previously. {Furthermore,
we also observe that the optimal transmit power allocations, $p_{k,k'}$ and $p_n$, follow the conventional multi-level water-filling structure due to different bandwidth allocations. In contrast, the optimal transmit power densities, $\frac{p_{k,k'}}{W^k_{MC}-w_k}$ and $\frac{p_n}{B^n_{SC}}$, follow the single-level water-filling structure \cite{ng2012energy3}.} Given Lagrange multipliers $\lambda$ and $\mu$, the optimal bandwidth and power allocation can be obtained immediately from Theorem \ref{theorem70}.

{{\begin{table}[]
\caption{\small{Energy-Efficient Joint Bandwidth and Power Allocation}} \label{tab:overall} \centering
\vspace{-0.8cm}
\begin{algorithm}[H]
 \caption{Energy-Efficient Joint Bandwidth and Power Allocation Algorithm} 
  \begin{algorithmic}[1]
\STATE \normalsize {  {\bf{Initialize}} the maximum accuracy  $\epsilon$ and set $q=1$ with given MU $k$ and SU $k'$;\\ 
\REPEAT 
\STATE       Initialize ${\lambda}$ and ${\mu}$; \\
\REPEAT
\STATE         Obtain  $w_k$, $p_{k,k'}$, and $p_n$ from (\ref{eq4.22})-(\ref{eq4.26});
\STATE         Update dual variables ${\lambda}$ and ${\mu}$ from (\ref{eq4.30}) and (\ref{eq4.34});
\UNTIL{${\lambda}$ and ${\mu}$ converge};
\STATE         Update $q$ from (\ref{eq11c});
\UNTIL{ $\Big(R_{\rm{\rm{tot}}}(S)-qP_{\rm{\rm{tot}}}(S)\Big) \leq \epsilon$. } }
\end{algorithmic}
\end{algorithm}
\end{table}}}

After computing the primal variables ($p_n, p_{k,k'}$, $w_{k}$), we now proceed to solve the dual problem, i.e., $\mathop{\text{minimize}} \limits_{{\lambda}\geq {0}, \mu\geq 0}~ \mathcal{G}({\lambda}, {\mu})$, where $\mathcal{G}({\lambda}, {\mu})=\mathop{\text{maximize}} \limits_{p_{n}, p_{k,k'}, w_k}~ \mathcal{L}(p_{n}, p_{k,k'}, w_k, \lambda, \mu)$.
Since a dual function is always convex by definition,  the commonly used ellipsoid method
 can be employed for updating $({\lambda}, {\mu})$ toward the optimal solution with guaranteed convergence \cite{Boyd}.
In addition, it has been pointed in \cite{yu2006dual} that
 the ellipsoid method is able to converge faster and more stable across a wide variety of situations.
 Thus, in this paper,  we adopt the ellipsoid method to update the Lagrange multipliers and the subgradients that will be used are given by
\begin{align}
\triangle\lambda&= P^{SC}_{\mathop{\max}} -\sum_{n=1}^{N}\frac{p_{n}}{\xi} - \sum_{k\in \Psi}\frac{p_{k,k'}}{\xi}-\sum_{k\in \Psi}\frac{q_{k}}{\xi}, \label{eq4.30}\\
\triangle\mu&= \sum_{n=1}^Nr^n_{SC} + \sum_{k\in \Psi}r_{k,k'} -  R^{SC}_{\mathop{\min}}. \label{eq4.34}
\end{align}
 A discussion regarding the choice of the initial ellipsoid,  the updating of the ellipsoid, and the stopping criterion for the ellipsoid method can be found in \cite{yu2006dual} (Section IV-B) and is thus omitted here for brevity.  The updated Lagrange multipliers in (\ref{eq4.30}) and (\ref{eq4.34}) can be used to obtain the bandwidth and power allocation variables in the primary variable optimization. Due to the concavity of primary problem (\ref{eq13c1}), the iterative optimization between $(p_n, p_{k,k'}, w_k)$ and $({\lambda}, {\mu})$ is guaranteed to converge to the optimal solution of (\ref{eq13c1}). The details of the bandwidth and power allocation for a given MU selection are summarized in Algorithm 1 in Table I.

\section{Energy-Efficient MU Selection}
In this section, we investigate the MU selection problem, i.e., to find the MU set $\Psi$ where $x_k=1,\forall\, k\in\Psi$. We first introduce the concept of the  trading EE that plays a key role in the algorithm development. Then, we study the MU selection condition under different cases and propose a low computational complexity algorithm based on the trading EE. Finally, we analyze the computational complexity of the proposed algorithm.
 \subsection{Trading EE}
  The Trading EE of MU $k$, $\forall\, k\in \mathcal{K}$, is defined as the total data rate of MU $k$ brought for the SC over the total power consumed by the SC in the spectrum-power trading, i.e.,
\begin{align}
EE_{k}=\frac{b_{k,k'}\log_2\left(1+\frac{p_{k,k'}g_{k,k'}}{b_{k,k'}N_0}\right)}{ \frac{p_{k,k'}}{\xi}+\frac{q_k}{\xi}},
\end{align}
where the numerator, $b_{k,k'}\log_2\left(1+\frac{p_{k,k'}g_{k,k'}}{b_{k,k'}N_0}\right)$, is the additional data rate that the SC obtains via serving MU $k$ and the denominator, $\frac{p_{k,k'}}{\xi}+\frac{q_k}{\xi}$, is the total power consumed for both supporting SU $k'$ and meeting the QoS of MU $k$.  As a result, the trading EE is in fact an evaluation of an MU in terms of the power utilization efficiency and can be regarded as a profit of the SC in the spectrum-power trading.
Then, the trading EE maximization problem of MU $k$ can be formulated as
 \begin{align}\label{eq20}
\mathop {\text{maximize} }\limits_{p_{k,k'}, b_{k,k'}, q_k, w_k} ~~~& EE_{k}=\frac{b_{k,k'}\log_2\left(1+\frac{p_{k,k'}g_{k,k'}}{b_{k,k'}N_0}\right)}{ \frac{p_{k,k'}}{\xi}+\frac{q_k}{\xi}}\nonumber \\
\text{s.t.} ~~~~~~~~
& \text{C2:}~~b_{k,k'}+w_k\leq W^k_{MC},  \nonumber \\
&\text{C3:}~~w_{k}\log_2\left(1+\frac{q_{k}h_{k}}{w_{k}N_0}\right)\geq R^k_{MC},  \nonumber \\
& \text{C7:}~~b_{k,k'}\geq  0, ~ w_{k}\geq  0.    
\end{align}
It is worth noting that problem (\ref{eq20}) can be regarded as a special case of problem (\ref{eq16}) where there is only one MU and one SU. Therefore, problem (\ref{eq20}) can be solved similarly by the algorithm proposed in Section III. However, in order to provide more insight, we show that the optimal solution can be solved more efficiently in the following theorem that is proved in Appendix C.
\begin{theorem}\label{trading}
Problem (\ref{eq20}) is equivalent to the following quasi-concave maximization problem
\begin{align}\label{eq201}
\mathop {\text{maximize} }\limits_{p_{k,k'}\geq0,\, w_k\geq0} ~~~& EE_{k}=\frac{(W^k_{MC}- w_k)\log_2\left(1+\frac{p_{k,k'}g_{k,k'}}{(W^k_{MC}- w_k)N_0}\right)}{ \frac{p_{k,k'}}{\xi}+\left(2^\frac{R^k_{MC}}{w_k}-1\right)\frac{w_kN_0}{h_k\xi}},
\end{align}
where $EE_k$ is strictly and jointly quasi-concave over $p_{k,k'}$ and $w_k$.
\end{theorem}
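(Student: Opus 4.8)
The plan is to prove the statement in two stages: first to establish that problems~(\ref{eq20}) and~(\ref{eq201}) are equivalent by showing that constraints~C2 and~C3 of~(\ref{eq20}) must be active at any optimum, and second to verify that the reduced objective in~(\ref{eq201}) is jointly (strictly) quasi-concave in $(p_{k,k'},w_k)$ via perspective-function arguments.

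For the equivalence, I would first note that every feasible point of~(\ref{eq20}) has $w_k>0$ (otherwise C3 reads $0\ge R^k_{MC}>0$), and every optimal point has $p_{k,k'}>0$ (otherwise $EE_k=0$, which is strictly dominated). Then two monotonicity observations pin down C2 and C3. Since the numerator of $EE_k$ does not involve $q_k$ while the denominator strictly increases in $q_k$, any slack in~C3 could be removed by decreasing $q_k$, strictly increasing $EE_k$; hence C3 is tight, i.e.\ $q_k=\left(2^{R^k_{MC}/w_k}-1\right)\frac{w_kN_0}{h_k}$. Similarly, writing $\beta(b)=b\log_2\!\left(1+\tfrac{p_{k,k'}g_{k,k'}}{bN_0}\right)$, an elementary computation gives $\beta'(b)=\frac{1}{\ln2}\bigl[\ln(1+t)-\tfrac{t}{1+t}\bigr]>0$ with $t=\tfrac{p_{k,k'}g_{k,k'}}{bN_0}>0$, so the numerator strictly increases in $b_{k,k'}$ while the denominator is independent of it; hence C2 is tight, $b_{k,k'}=W^k_{MC}-w_k$. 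Substituting these two identities into~(\ref{eq20}) collapses the feasible set to $\{p_{k,k'}\ge0,\ 0\le w_k\le W^k_{MC}\}$ and reproduces~(\ref{eq201}) exactly; conversely every feasible point of~(\ref{eq201}) lifts back with the same objective value, which yields the equivalence.

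For quasi-concavity, I would work on the domain $\mathcal{D}=\{p_{k,k'}\ge0,\ 0<w_k<W^k_{MC}\}$ (the excluded boundary faces only carry $EE_k=0$ and are irrelevant for the maximization) and treat numerator and denominator separately. The numerator $N(p_{k,k'},w_k)=(W^k_{MC}-w_k)\log_2\!\left(1+\tfrac{p_{k,k'}g_{k,k'}}{(W^k_{MC}-w_k)N_0}\right)$ is, up to the affine substitution $w_k\mapsto W^k_{MC}-w_k$, the perspective of the concave map $x\mapsto\log_2(1+\tfrac{g_{k,k'}}{N_0}x)$ in the variables $(p_{k,k'},\,W^k_{MC}-w_k)$, hence jointly concave; it is also nonnegative on $\mathcal{D}$. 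The denominator $D(p_{k,k'},w_k)=\frac{p_{k,k'}}{\xi}+\frac{N_0}{h_k\xi}\,w_k\!\left(2^{R^k_{MC}/w_k}-1\right)$ equals an affine term plus $\frac{N_0}{h_k\xi}$ times $w_k\,2^{R^k_{MC}/w_k}$, which is the perspective of the convex map $x\mapsto 2^{R^k_{MC}x}$ evaluated at $x=1$ and hence convex in $w_k$; so $D$ is jointly convex, and $D>0$ on $\mathcal{D}$ because $2^{R^k_{MC}/w_k}>1$. A ratio of a nonnegative concave function over a positive convex function is quasi-concave: for $\alpha\ge0$ the superlevel set $\{EE_k\ge\alpha\}=\{N-\alpha D\ge0\}$ is convex since $N-\alpha D$ is concave, and for $\alpha<0$ it is all of $\mathcal{D}$.

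The main obstacle is the strictness, since $N$ is positively homogeneous of degree one in $(p_{k,k'},\,W^k_{MC}-w_k)$ and is therefore concave but \emph{not} strictly concave (its Hessian is rank-deficient), so strict quasi-concavity cannot be read off the numerator alone. The remedy I would use is a null-space argument applied to $N-\alpha D$ for $\alpha>0$: by Euler's identity the unique null direction of $\nabla^2 N$ is the radial direction $\bigl(p_{k,k'},\,w_k-W^k_{MC}\bigr)$, whereas $\nabla^2(-\alpha D)$ is negative semidefinite with null direction the $p_{k,k'}$-axis $(1,0)$ (using $\frac{d^2}{dw^2}\bigl(w\,2^{R^k_{MC}/w}\bigr)=2^{R^k_{MC}/w}\,\tfrac{(R^k_{MC}\ln2)^2}{w^3}>0$). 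Since $w_k<W^k_{MC}$ makes these two directions linearly independent, $\nabla^2(N-\alpha D)$ is negative definite on $\mathcal{D}$, so $N-\alpha D$ is strictly concave there; hence every superlevel set $\{EE_k\ge\alpha\}$ with $\alpha>0$ is strictly convex, and a standard midpoint argument upgrades quasi-concavity to strict quasi-concavity of $EE_k$ on $\mathcal{D}$, which is exactly what~(\ref{eq201}) asserts (and what guarantees a unique maximizer).
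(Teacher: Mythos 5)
Your proof is correct, and for the equivalence and the basic quasi-concavity it follows the same route as the paper: both arguments first force C2 and C3 to be active by monotonicity of the objective in $b_{k,k'}$ and $q_k$, substitute, and then invoke the ``nonnegative concave numerator over positive convex denominator'' criterion via superlevel sets. Where you genuinely go beyond the paper is the strictness. The paper's appendix only observes that the numerator is strictly concave in $w_k$ for fixed $p_{k,k'}$ and then cites references; that single-variable observation does not establish \emph{strict joint} quasi-concavity, because the numerator, being the perspective of $\log_2(1+\frac{g_{k,k'}}{N_0}x)$, is positively homogeneous in $(p_{k,k'},W^k_{MC}-w_k)$ and hence has a rank-deficient Hessian. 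Your null-space argument --- that the radial kernel direction $(p_{k,k'},\,w_k-W^k_{MC})$ of $\nabla^2 N$ and the kernel direction $(1,0)$ of $-\alpha\nabla^2 D$ are linearly independent whenever $w_k<W^k_{MC}$, so $N-\alpha D$ is strictly concave for every $\alpha>0$ --- supplies the missing ingredient and is, to my reading, the first actual proof of the uniqueness claim that the paper relies on in Section V. The only cosmetic caveat is that along the ray $p_{k,k'}=0$ the objective is identically zero, so strict quasi-concavity in the literal sense fails on that face; your remark that such points are dominated and irrelevant to the maximization disposes of this, and the paper's own statement suffers from the same boundary degeneracy.
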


Since $EE_k$ is strictly and jointly quasi-concave over $p_{k,k'}$ and $w_k$ under a convex feasible set, the optimal solutions of $p_{k,k'}$ and $w_k$ are both unique. {This suggests that the alternating method, also known as coordinated descent method \cite{bertsekas1999nonlinear}, can be employed to obtain the optimal $p_{k,k'}$ and $w_k$ efficiently \cite{miao2010energy}.  Specifically, for a given $w_k$ or  $p_{k,k'}$, problem  (\ref{eq201}) is simplified  into a univariate quasi-concave maximization with respect to $p_{k,k'}$ or $w_k$,  where the optimal values can be easily obtained by the bisection method \cite{Boyd}. For example, for a given $w_k$, it has been shown in \cite{miao2010energy} that by judging the derivative of $EE_k$ with respect to $p_{k,k'}$ is zero or not, we can obtain the optimal $p_{k,k'}$.
}


 \subsection{Trading EE based MU Selection}
 The key observation of the user trading EE is that both $b_{k,k'}\log_2\left(1+\frac{p_{k,k'}g_{k,k'}}{b_{k,k'}N_0}\right)$ and $\frac{p_{k,k'}}{\xi}+\frac{q_k}{\xi}$ will be removed respectively from the numerator and the denominator of the objective function in problem (\ref{eq17}) if MU $k$ is not served by the SC.
 With the user trading EE defined in Section III-A, we now investigate the MU selection conditions for different cases. Recall that $\Psi$ denotes an arbitrary set of  MUs that are scheduled by the SC, i.e., $\Psi\triangleq\{k\,|\,x_k=1, k\in \mathcal{K}\}$, and  $EE^*_{\Psi}$ denotes the maximum system EE of problem (\ref{eq17}), which can be obtained by Algorithm 1 based on set $\Psi$. Then, we have the following theorem, proved in Appendix D, to facilitate the algorithm development.
 \begin{theorem}\label{scheduling}
 For any unscheduled MU $m$, i.e.,  $m\in \mathcal{K}, m\notin \Psi$:
 \begin{enumerate}
 \item in the absence of constraints C1 and C4 in problem (\ref{eq17}),  serving MU $m$ improves the EE of the SC \emph{if and only if} $EE^*_m> EE^*_{\Psi}$;
   \item in the absence of constraint C1  in problem (\ref{eq17}),  serving MU $m$ improves the EE of the SC \emph{if} $EE^*_m> EE^*_{\Psi}$;
   \item in the absence of constraint C4  in problem (\ref{eq17}),  serving MU $m$ improves the EE of the SC \emph{only if} $EE^*_m>EE^*_{\Psi}$.
 \end{enumerate}
\end{theorem}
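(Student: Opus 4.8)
The plan is to reduce everything to the elementary mediant inequality: for $a,c\ge 0$ and $b,d>0$, the ratio $\frac{a+c}{b+d}$ lies strictly between $\frac{a}{b}$ and $\frac{c}{d}$ whenever $\frac{a}{b}\neq\frac{c}{d}$; in particular $\frac{a+c}{b+d}>\frac{a}{b}$ if and only if $\frac{c}{d}>\frac{a}{b}$ (which follows at once by clearing denominators). The point is that, by Theorem 1, deciding to serve an extra MU $m$ appends exactly the term $r_{m,m'}$ to the numerator of the objective in (\ref{eq17}) and exactly $\frac{p_{m,m'}}{\xi}+\frac{q_m}{\xi}$ to its denominator, and the ratio of these two appended quantities is precisely a feasible objective value of the trading-EE problem (\ref{eq20}) for MU $m$. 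So I will compare an optimal allocation for the MU set $\Psi\cup\{m\}$ with optimal allocations for the set $\Psi$ and for the trading-EE problem of $m$, and pass between them with the mediant inequality. All denominators are strictly positive, since $P_{\rm c}>0$ and, by C3, $q_m>0$ whenever $m$ is served.

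For the sufficiency direction (needed in cases 1 and 2), I would take an optimal allocation of (\ref{eq17}) for the set $\Psi$ attaining $EE^*_{\Psi}$ and an optimal allocation of (\ref{eq20}) for MU $m$ attaining $EE^*_m$, and splice them into a single allocation that serves $\Psi$ exactly as before and serves $m$ with the trading-optimal resources. This spliced allocation is feasible for (\ref{eq17}) with MU set $\Psi\cup\{m\}$: constraints C2 and C3 for $m$ hold by construction (with equality, as in Theorem 1), C4 still holds because adding $m$ only increases the total SU rate, and C1 is absent in cases 1 and 2. Its system EE equals $\frac{R_\Psi+r_{m,m'}}{P_\Psi+p_{m,m'}/\xi+q_m/\xi}$ with $R_\Psi/P_\Psi=EE^*_{\Psi}$ and $r_{m,m'}\big/(p_{m,m'}/\xi+q_m/\xi)=EE^*_m>EE^*_{\Psi}$, so the mediant inequality gives a value strictly above $EE^*_{\Psi}$; hence $EE^*_{\Psi\cup\{m\}}>EE^*_{\Psi}$.

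For the necessity direction (needed in cases 1 and 3), I would start from an optimal allocation of (\ref{eq17}) for $\Psi\cup\{m\}$ with value $EE^*_{\Psi\cup\{m\}}>EE^*_{\Psi}$ and strip off the resources assigned to $m$, writing $R_{\rm rest},P_{\rm rest}$ for the resulting numerator and denominator. The stripped allocation is feasible for (\ref{eq17}) with MU set $\Psi$: C1 (present in case 3) still holds since the total transmit power only decreases, and C4 is absent in cases 1 and 3; therefore $R_{\rm rest}/P_{\rm rest}\le EE^*_{\Psi}$. Combining $EE^*_{\Psi\cup\{m\}}=\frac{R_{\rm rest}+r_{m,m'}}{P_{\rm rest}+p_{m,m'}/\xi+q_m/\xi}>EE^*_{\Psi}\ge R_{\rm rest}/P_{\rm rest}$ and clearing denominators yields $r_{m,m'}>EE^*_{\Psi}\,(p_{m,m'}/\xi+q_m/\xi)$, i.e. the trading EE actually attained by $m$ in this allocation exceeds $EE^*_{\Psi}$; since that $(p_{m,m'},q_m,b_{m,m'},w_m)$ is feasible for (\ref{eq20}), we get $EE^*_m>EE^*_{\Psi}$.

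The main obstacle is the bookkeeping about which constraints may legitimately be dropped or restored under the splicing and stripping operations — this is exactly what makes cases 2 and 3 one-directional, and I would make the two failure modes explicit. With C1 present, splicing the trading optimum of $m$ on top of the $\Psi$-optimum can overshoot $P^{SC}_{\max}$, so sufficiency may fail; hence case 3 retains only necessity. With C4 present, stripping $m$ from the $\Psi\cup\{m\}$-optimum may violate the minimum-rate constraint, so $R_{\rm rest}/P_{\rm rest}\le EE^*_{\Psi}$ is no longer guaranteed; hence case 2 retains only sufficiency. In each retained direction I will verify the relevant preservation carefully: total transmit power is monotone under stripping, the SU sum rate is monotone under splicing, and C2, C3 for $m$ are inherited either from the trading-EE optimum or from the tightness established in Theorem 1. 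I will also note that $EE^*_m$ and $EE^*_{\Psi}$ are attained, which follows from the quasi-concavity in Theorem 3 and the concave reformulation underlying Algorithm 1.
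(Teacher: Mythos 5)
Your proposal is correct and follows essentially the same route as the paper's Appendix D: the same mediant inequality (the paper's Lemma 1), the same splice of the $\Psi$-optimum with the trading-EE optimum for sufficiency, the same decomposition of the $\Psi\cup\{m\}$-optimum for necessity, and the same identification of which constraint (C1 for splicing, C4 for stripping) breaks which direction. The only cosmetic difference is that you argue the ``only if'' part directly by clearing denominators, whereas the paper proves the contrapositive via the max side of the mediant inequality; these are equivalent.
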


Theorem \ref{scheduling} reveals the relationship between the inequality $EE^*_k> EE^*_{\Psi}$ and the MU selection under different constraints in problem  (\ref{eq17}).  Specifically, without considering both the maximum power constraint and the system minimum data rate constraint, $EE^*_k> EE^*_{\Psi}$ is the sufficient and necessary condition for serving MU $k$.  Besides, without considering the maximum power constraint, $EE^*_k> EE^*_{\Psi}$ is reduced to a sufficient condition for serving MU $k$. In contrast, without considering the minimum system  data rate constraint, $EE^*_k> EE^*_{\Psi}$ is reduced to a necessary condition for serving MU $k$.
 Since these two constraints, i.e., C1 and C4,  may not be met with equalities simultaneously in most cases, it means that $EE^*_k> EE^*_{\Psi}$ is either sufficient or necessary for serving MU $k$ in practice.  It is also interesting to mention that $EE^*_k> EE^*_{\Psi}$ has an important practical interpretation:  the trading EE  of MU $k$ that is selected by the SC should be higher than the current EE of the SC. In other words, the spectrum-power trading on this MU enables the SC to have a better utilization of the power. Otherwise, the spectrum-power trading is only beneficial to the MC and does not bring any benefit for the SC.

The main implication of Theorem \ref{scheduling} is that an MU with higher user trading EE is more likely to be scheduled by the SC.
Based on this insight, a computationally efficient MU selection scheme is designed as follows. First, sort all the MUs in the descending order according to the user trading EE. Second, for MU $k$ satisfying the condition  $EE^*_k> EE^*_{\Psi}$ in Theorem \ref{scheduling}, set $x_k=1$ and maximize the system EE in problem (\ref{eq17}) by Algorithm 1. Third, by comparing the updated system EE with previous system EE where $x_k=0$ holds, decide whether to schedule MU $k$.  The details of the MU selection procedure is summarized in Algorithm 2 in Table II. To understand Algorithm 2 better, we provide the following corollary to characterize the optimality condition that has been proved in Appendix E.
{\begin{corollary}\label{optmality}
Algorithm 2 is optimal for problem (\ref{eq17}) in the absence of constraints C1 and C4.
\end{corollary}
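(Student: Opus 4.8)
The plan is to exploit the fact that deleting the coupling constraints C1 and C4 makes the Dinkelbach transform of problem~(\ref{eq17}) separable across the scheduled MUs, so that the optimal MU set can be read off from per-MU zero-crossings. First I would apply the fractional transform~(\ref{eq11c})--(\ref{eq12c}) to problem~(\ref{eq17}) with C1 and C4 removed. The remaining constraints C2, C3, C6, C7 couple neither distinct MUs nor MUs with SUs, so the subtractive objective $T_\Psi(q)\triangleq\max\{R_{\rm total}-qP_{\rm total}\}$ splits as $T_\Psi(q)=T_\emptyset(q)+\sum_{k\in\Psi}g_k(q)$, where $T_\emptyset(q)=\sum_n\max_{p_n\ge0}\{r^n_{SC}-qp_n/\xi\}-qP_{\rm c}$ collects the SU terms and $g_k(q)=\max\{r_{k,k'}-q(p_{k,k'}+q_k)/\xi\}$, the maximum being over the feasible set of the trading-EE problem~(\ref{eq201}), is the contribution of MU $k$. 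I would then record three facts: $T_\emptyset$, each $g_k$, and hence each $T_\Psi$ are continuous and \emph{strictly} decreasing in $q$ (for $g_k$ this uses that C3 forces $q_k>0$, so every affine branch $r_{k,k'}-q(p_{k,k'}+q_k)/\xi$ has strictly negative slope); the unique zero of $T_\Psi$ is $EE^*_\Psi$; and the unique zero of $g_k$ is the trading EE $EE^*_k$ of Theorem~\ref{trading} (apply the same transform to problem~(\ref{eq201})), so that $g_k(q)>0\iff q<EE^*_k$.

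Next I would characterise the optimal MU set. For fixed $q$, maximising $T_\Psi(q)$ over $\Psi$ means keeping exactly the MUs with $g_k(q)>0$, so the upper envelope $\widehat T(q)\triangleq\max_\Psi T_\Psi(q)=T_\emptyset(q)+\sum_{k:\,EE^*_k>q}g_k(q)$ is continuous and strictly decreasing and has a unique root $\widehat q$. Since $T_\Psi\le\widehat T$ pointwise, $T_\Psi(\widehat q)\le0$ and therefore $EE^*_\Psi\le\widehat q$ for every $\Psi$; moreover $\Psi^\star\triangleq\{k:\,EE^*_k>\widehat q\}$ satisfies $T_{\Psi^\star}(\widehat q)=\widehat T(\widehat q)=0$, hence $EE^*_{\Psi^\star}=\widehat q$. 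Thus $\Psi^\star$ is optimal for problem~(\ref{eq17}) without C1 and C4, and in the sorted order $EE^*_{k_1}\ge\cdots\ge EE^*_{k_K}$ adopted by Algorithm~2 it is the prefix $\{k_1,\dots,k_{m^\star}\}$ with $m^\star=|\{k:\,EE^*_k>\widehat q\}|$.

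It then remains to show Algorithm~2 outputs $\Psi^\star$, which I would do by induction on the index $i$ of the MU under consideration: after $k_1,\dots,k_i$ have been processed, the retained set equals $\{k_1,\dots,k_{\min(i,m^\star)}\}$. If $i\le m^\star$, the set just before step $i$ is $\{k_1,\dots,k_{i-1}\}$, whose EE is at most $\widehat q<EE^*_{k_i}$, so the selection test is met; running Algorithm~1 and applying Theorem~\ref{scheduling}(1) shows the EE strictly increases, so $k_i$ is retained. If $i>m^\star$, the set just before step $i$ is $\Psi^\star$ with EE equal to $\widehat q\ge EE^*_{k_i}$, so the test fails and $k_i$ is dropped; the boundary case $EE^*_{k_i}=\widehat q$ is handled by the strict inequality in the test, consistently with $g_{k_i}(\widehat q)=0$. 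Hence Algorithm~2 terminates with the retained set $\Psi^\star$, and since Algorithm~1 already returns the EE-optimal bandwidth and power for any fixed MU set, Algorithm~2 is optimal for problem~(\ref{eq17}) in the absence of C1 and C4.

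The main obstacle is the first two steps. One must verify carefully that dropping C1 and C4 genuinely decouples the Dinkelbach problem — which is what makes the per-MU crossings $EE^*_k$ and the envelope root $\widehat q$ meaningful — and then argue that the \emph{greedy} prefix built while the comparison threshold $EE^*_\Psi$ is still increasing nonetheless stops exactly at the \emph{self-consistent} prefix $\{k:\,EE^*_k>\widehat q\}$. Once the envelope characterisation and the strict monotonicity of $T_\Psi$ and $g_k$ are established, the inductive check of Algorithm~2 and the tie-breaking are routine, with Theorem~\ref{scheduling}(1) supplying the only non-trivial per-step ingredient.
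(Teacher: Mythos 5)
Your proof is correct, but it takes a genuinely different route from the paper's. The paper stays entirely within the ratio domain: it first proves a monotone-threshold lemma (if adding MU $k$ helps, then adding any MU with higher trading EE also helps, and vice versa), deduces that Algorithm~2 selects a prefix $\{1,\dots,k^*\}$ of the sorted list, and then shows by a swap-and-contradiction argument (inserting an auxiliary set $\widetilde\Psi$ and applying Theorem~\ref{scheduling}(1) twice) that no other set can beat that prefix. You instead exploit the fact that removing C1 and C4 makes the Dinkelbach surrogate $R_{\rm total}-qP_{\rm total}$ separable, write $T_\Psi(q)=T_\emptyset(q)+\sum_{k\in\Psi}g_k(q)$ with the root of $g_k$ equal to the trading EE $EE^*_k$, and characterise the optimal set as $\{k:EE^*_k>\widehat q\}$ via the root $\widehat q$ of the upper envelope $\max_\Psi T_\Psi$; Theorem~\ref{scheduling}(1) is then needed only to verify by induction that the greedy test in Algorithm~2 reproduces exactly that prefix. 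Your envelope argument buys something the paper's proof does not make explicit: it identifies the optimal set \emph{a priori} as a fixed point ($\Psi^\star=\{k:EE^*_k>EE^*_{\Psi^\star}\}$) independently of the algorithm, whereas the paper's argument is tied to the sorted-order traversal; conversely, the paper's route is more elementary in that it never needs the separability of the subtractive form, only Lemma~1 and the selection criterion of Theorem~\ref{scheduling}. Two small points worth tightening in your write-up: the strict monotonicity of $g_k$ should also note that $P_{\rm c}>0$ already makes $T_\emptyset$ strictly decreasing (so the conclusion does not hinge on $R^k_{MC}>0$), and the claim that each $g_k$ attains its maximum and is continuous for $q>0$ deserves a sentence, since without C1 the per-MU power is unbounded and only the logarithmic growth of the rate guarantees a finite maximiser.
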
}

Corollary 1 reveals that Algorithm 2 achieves the maximum system EE of the SC if constraints C1 and C4 are not considered. This can be interpreted as follows. Without considering C1 and C4,  Theorem \ref{scheduling} points out that an MU with trading EE higher than the current system EE is sufficient and necessary to be scheduled. In addition, the updated system EE after scheduling the MU is still lower than the trading EE of this MU. This indicates that if  MU $k$ is scheduled in the optimal solution, then the MUs with higher trading EE than MU $k$ should also be scheduled.

{{\begin{table}[]
\caption{\small{Energy-Efficient Spectrum-Power Trading Algorithm}} \label{tab:overall} \centering
\vspace{-0.8cm}
\begin{algorithm}[H]
 \caption{Energy-Efficient Spectrum-Power Trading Algorithm} 
  \begin{algorithmic}[1]
\STATE \normalsize {  Obtain $EE_k$, $\forall\, k$, by solving problem (\ref{eq201}) in Theorem 3; \\
\STATE Sort all MUs in the descending order of trading EE, i.e., $EE_{1}^{*} > EE_{2}^{*}>,...,> EE_{K}^{*}$; \\
\STATE Set  $\Psi={\O}$ and obtain $EE_{\Psi}^{*}$ by Algorithm 1; \\  \STATE {\bf{for}}  $k=1:K$\\
\STATE       ~~~ Obtain $EE^*_{\Psi \bigcup \{k\}}$ by Algorithm 1; \\
\STATE      ~~~ {\bf{if}}    $EE^*_{\Psi \bigcup \{k\}} > EE_{\Psi}^{*}$    \\
\STATE        ~~~~~~ $\Psi={\Psi\bigcup \{k\}}$;           \\    \STATE      ~~~ {\bf{end}} \\
\STATE {\bf{end} }}
\end{algorithmic}
\end{algorithm}
\end{table}}}
 \subsection{Computational Complexity Analysis}

The computational complexity of Algorithm 2 can be evaluated as follows. First, the complexity for obtaining  bandwidth and power allocation variables in Algorithm 1 linearly increases with the number of MUs and the number of SUs, i.e., $\mathcal{O}(K+N)$.  Second, the complexities of the ellipsoid method for updating dual variables \cite{yu2006dual} and the Dinkelbach method for updating $q$ \cite{Boyd,ng2012energy1,loodaricheh2014energy,cheung2012achieving} are both independent of $K$ and $N$. Finally, the complexity of performing the MU selection linearly increases with $K$. Therefore, the total complexity of Algorithm 2 is $\mathcal{O}\big(K(K+N)\big)$ \cite{cormen2009introduction}\footnote{{Note that big $O(\cdot)$ notation is a mathematical notation that is used to illustrate algorithms by how they respond to the changes of the problem size [39]. Thus,  factors that are independent of the problem size $K$ and $N$ are omitted in its formal expression.}}.
 {\subsection{Discussion on Arbitrary Weights}
 As mentioned in Section III, assigning different weights to different SUs in problem (10) does not affect the proposed optimization framework. Now, we show how to tackle problem (10) with arbitrary weights of different SUs.  From (\ref{eq6}),  we know that the weighted system data rate of SUs can be expressed as
\begin{align}\label{eq61}
R_{\rm{total}}& =  \sum_{n=1}^{N}\alpha_n R_n =\sum_{n=1}^{N}\alpha_n\left( r^n_{SC} + \sum_{k=1}^Kx_kr_{k,n}\right)\nonumber\\
&= \sum_{n=1}^N\alpha_n r^n_{SC} + \sum_{n=1}^N\alpha_n\sum_{k=1}^Kx_kr_{k,n} \nonumber\\
&= \sum_{n=1}^N\alpha_n r^n_{SC} + \sum_{k=1}^Kx_k\sum_{n=1}^N\alpha_nr_{k,n},
\end{align}
  where $\alpha_n$ denotes the weight of SU $n$, $\forall\, n$. Due to joint effects of weights and channel conditions, Theorem 1 does not hold any more. More specifically, the bandwidth obtained by the SC via spectrum-power trading from a MU may be shared with multiple SUs rather than one SU in the case of equal weights for SUs. However, the resource allocation algorithm and the MU selection scheme proposed in Section IV and Section V can be readily extended, which are shown as follows. }

{For given MU selection variables $x_k$, $\forall\, k$, the problem transformation between (\ref{eq11c}) and (\ref{eq12c}) can still be applied and it is also easy to verify that the resulting problem in subtractive form is also a concave maximization problem as problem (\ref{eq13c1}). Thus, the joint bandwidth and power allocation can be similarly obtained via exploiting the KKT conditions of the transformed problem. Now, we show how to modify the defined trading EE to tackle the case of arbitrary weights. Recall that the main characteristic of the trading EE in Section V-A is to characterize the obtained throughput and the power consumption in the spectrum-power trading on an MU. Thus, the trading EE of MU $k$,  $\forall\, k\in \mathcal{K}$, in the case of arbitrary weights can be modified as
\begin{align}
EE_{k}=\frac{\sum_{n=1}^N\alpha_nr_{k,n}}{  \sum_{n=1}^{N}\frac{p_{k,n}}{\xi}+\frac{q_{k}}{\xi}}=\frac{\sum_{n=1}^N\alpha_nb_{k,n} \log_2\left(1+\frac{p_{k,n}g_{k,n}}{b_{k,n}N_0}\right)}{  \sum_{n=1}^{N}\frac{p_{k,n}}{\xi}+\frac{q_{k}}{\xi}} ,
\end{align}
where the numerator, $\sum_{n=1}^N\alpha_n r_{k,n}$, is the additional data rate that the SC obtains via serving MU $k$ and the denominator, $ \sum_{n=1}^{N}\frac{p_{k,n}}{\xi}+\frac{q_{k}}{\xi}$, is the total power consumed for both supporting SUs and meeting the QoS of MU $k$. Then, the maximum trading EE can be still readily obtained by solving a counterpart of problem (\ref{eq20}). It is worth noting that the introduction of the weights $\alpha_n$, $\forall\, n$, will not affect the structural properties of the relationship between the system EE of SC and the trading EE, i.e., Lemma 1, (33), and (34) in Appendix D still hold with the modified trading EE. Therefore, Theorem \ref{scheduling} and Corollary \ref{optmality} can be similarly extended. }
\vspace{-0.3cm}
\section{Numerical Results }
\begin{table}[!t]
\centering
\caption{\label{table2} \small{Simulation Parameters}} \label{parameters}
\renewcommand\arraystretch{0.79}
\begin{tabular}{|c|c|}
\hline
{Parameter} & {Description} \\
\hline

Maximum allowed transmit power of the SC, $P^{SC}_{\max}$&      $30$  dBm    {\cite{shakir2013green}}     \\
\hline
Licensed bandwidth of each MU, $W^k_{MC}$  &      $360$  kHz  {\cite{ma2016resource}}\\
\hline
Licensed bandwidth of each SU, $B^{n}_{SC}$ &      $180$  kHz {\cite{ma2016resource}} \\
\hline
Static circuit power of the SC, $P_{\rm{c}}$ &    $2$ W {\cite{shakir2013green}} \\
\hline
Power spectral density of thermal noise  &    $-174$ dBm/Hz \\
\hline
Power amplifier efficiency, $\xi$ &    $0.38$    \\
\hline
Path loss model  &    $(128.1 +37.6\log_{10}d/1000)$  dB    \\
\hline
Lognormal Shadowing  &    $8$ dB    \\
\hline
Penetration loss  &    $20$ dB    \\
\hline
Fading  &    Rayleigh fading    \\
\hline
\end{tabular}
\end{table}

In this section, we provide numerical results to demonstrate the effectiveness of the proposed spectrum-power trading based resource allocation algorithm.
{The main parameters adopted in this work are from relevant works \cite{shakir2013green,ma2016resource,ngo2014joint,saeed2013energy,liu2013massive}.}
 We consider a two-tier heterogeneous network where there exist an MC and an SC with the coverage radii of 500 $m$ and  50 $m$, respectively. Five SUs are uniformly distributed within the coverage of the SC BS while five MUs are uniformly distributed within the distances of [20 200] $m$ away from the SC BS.
  The distance between the SC BS and the MC BS is set to 500 $m$.  Without loss of generality, we assume that all MUs have identical parameters, i.e., the same amount of  available bandwidth, $W^k_{MC}$, and minimum data rate requirement, $R^k_{MC}$. In addition, all  SUs have the identical licensed bandwidth,  $B^n_{SC}$. Unless specified otherwise, the major parameters are listed in Table \ref{parameters} and $R_{SC}$ and $R^k_{MC}$ are set to be 1000 Kbits and 700 Kbits, respectively.
\subsection{System EE versus Maximum Transmit Power of SC, $P^{SC}_{\max}$}
\begin{figure}[!t]
\centering
\includegraphics[width=3.2in]{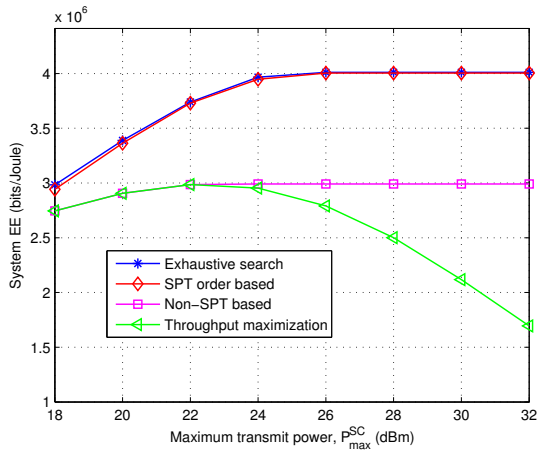}
\caption{System EE versus the maximum allowed transmit power of the SC.}\label{fig2}
\end{figure}

In Figure \ref{fig2}, we compare the achieved system EE  of the following schemes: 1) Exhaustive search \cite{Boyd}; 2) SPT order based: Algorithm 2 in Section V; 3) Non-SPT based: the EE maximization without spectrum-power trading \cite{ng2012energy2}; 4) Throughput Maximization: conventional spectral efficiency maximization \cite{ngo2014joint}.  It is observed that the proposed Algorithm 2 achieves near-optimal performance and outperforms all other suboptimal schemes, which demonstrates the effectiveness of the proposed scheme.
We also observe that the EEs of the SPT order based scheme and the non-SPT based scheme first increases and then remain constants as $P^{SC}_{\max}$ increases. In contrast, the EE of the throughput maximization scheme first increases and then decreases with increasing $P^{SC}_{\max}$, which is due to its greedy use of the transmit power. In addition, it is also seen that the performance gap between the  SPT order based scheme and the non-SPT based scheme first increases and then approaches a constant. This is because when the transmit power of the SC is limited, such as  $P^{SC}_{\max}$ = $12$ dBm, the SC may not have sufficient transmit power freedom to serve many MUs and thereby the spectrum-power trading is less likely realized, which in return limits its own performance improvement. As $P^{SC}_{\max}$ increases, compared with the non-SPT based scheme, the SC not only has more transmit power to improve its EE via serving its own SUs, but also has more transmit power freedom to obtain additional bandwidth from the MC via spectrum-power trading, which thereby strengthens the effect of performance improvement. Finally, when all the `good' MUs with higher trading EE are being scheduled by the SC, then the system EE improves with $P^{SC}_{\max}$ with diminishing return and eventually approaches a constant due to the same reason as that of the non-SPT based scheme.
\begin{figure}[!t]
\centering
\includegraphics[width=3.2in]{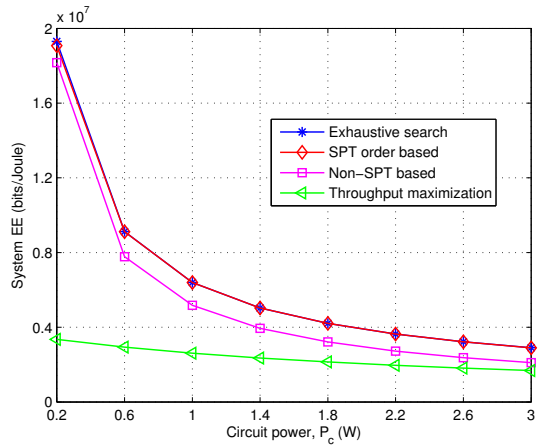}
\caption{System EE versus the circuit power of the SC.}\label{fig3}
\end{figure}
  \subsection{System EE versus Circuit Power of  SC, $P_{\rm{c}}$ }
  Figure \ref{fig3} illustrates the performance of all schemes as a function of the circuit power consumption of the SC. We can observe that the system EE of all schemes decreases with increasing $P_{\rm{c}}$ since the circuit power consumption is always detrimental to the system EE. Also, the proposed Algorithm 2 performs almost the same as the exhaustive search.
  In addition, the performance gap between the non-SPT scheme and the throughput maximization scheme decreases with increasing $P_{{\rm{c}}}$. This is because
 as  $P_{\rm{c}}$ increases, the circuit power consumption dominates the total power consumption rather than the transmit power consumption. Thus, improving the system EE is almost equivalent to improving the system data rate, which only results in marginal performance gap.

 However, it is interesting to note that the performance gap between the SPT order based scheme and the non-SPT based scheme does not decrease but increases when $P_{\rm{c}}$ is in a relatively small regime, such as $P_{\rm{c}} \in [0.2 ~ 1]$ W. This is because when $P_{\rm{c}}$ is very small, the SC system itself enjoys a high system EE which leaves it a less incentive to perform spectrum-power trading with the MC. Thus, the system EE of the SPT order based scheme  decreases with the similar slope as that of the non-SPT based scheme. As $P_{\rm{c}}$ increases, the system EE of the SC further decreases, which would motivate the SC to perform spectrum-power trading. As a result, the performance degradation caused by an increasing $P_{\rm{c}}$ is relieved for the SPT order based scheme, which thereby yields an increased performance gap between these two schemes in small $P_{\rm{c}}$ regime.  Furthermore, when $P_{\rm{c}}$ is sufficiently large such that all the `good' MUs are being selected, the performance gap between these two schemes decreases again due to the the domination of the circuit power in the total power consumption.
\begin{figure}[t]
\centering
\subfigure[Power saved for MC versus distance.]{\includegraphics[width=3.2in, height=2.4in]{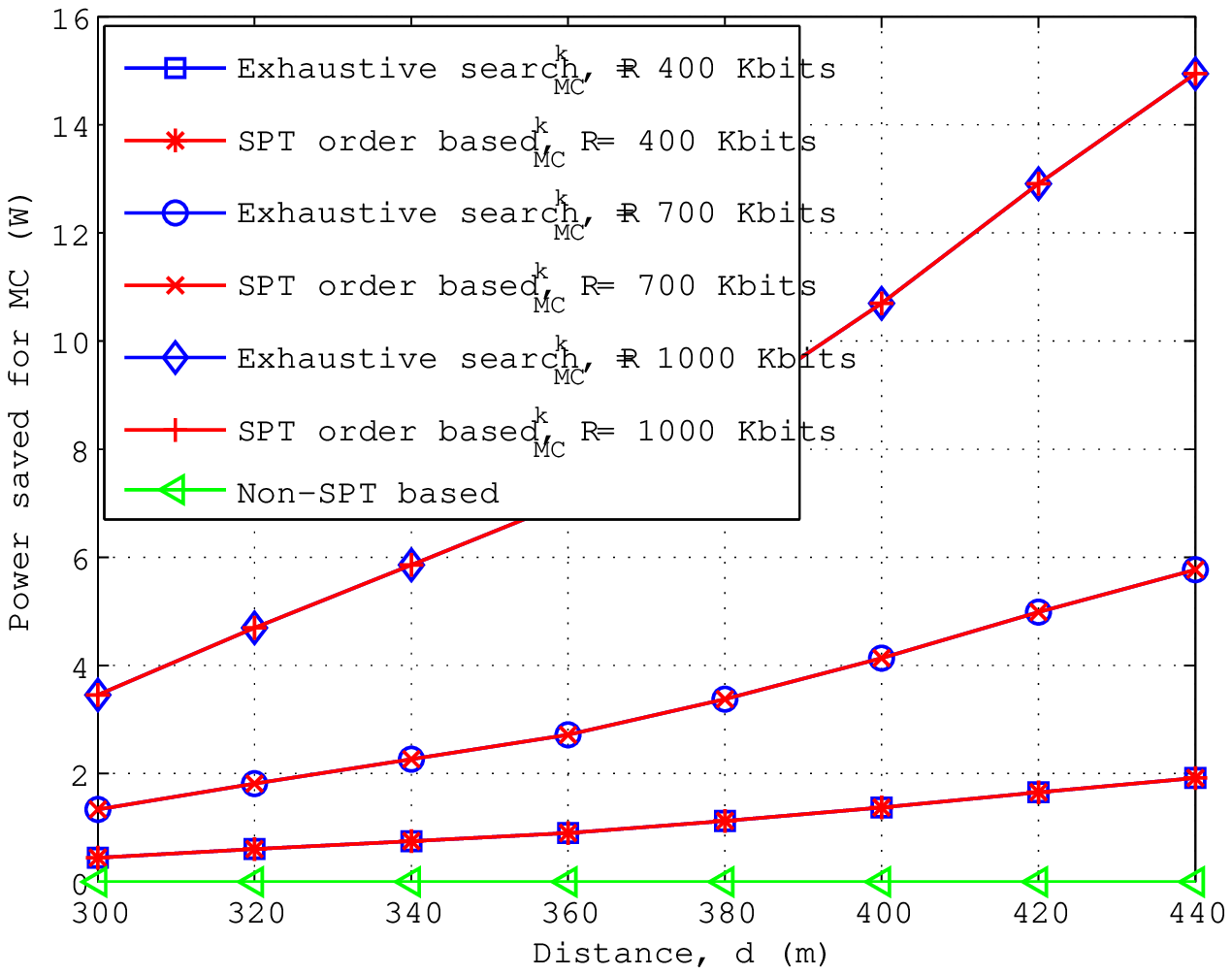}}
\subfigure[System EE of the SC versus distance.]{\includegraphics[width=3.2in,height=2.4in]{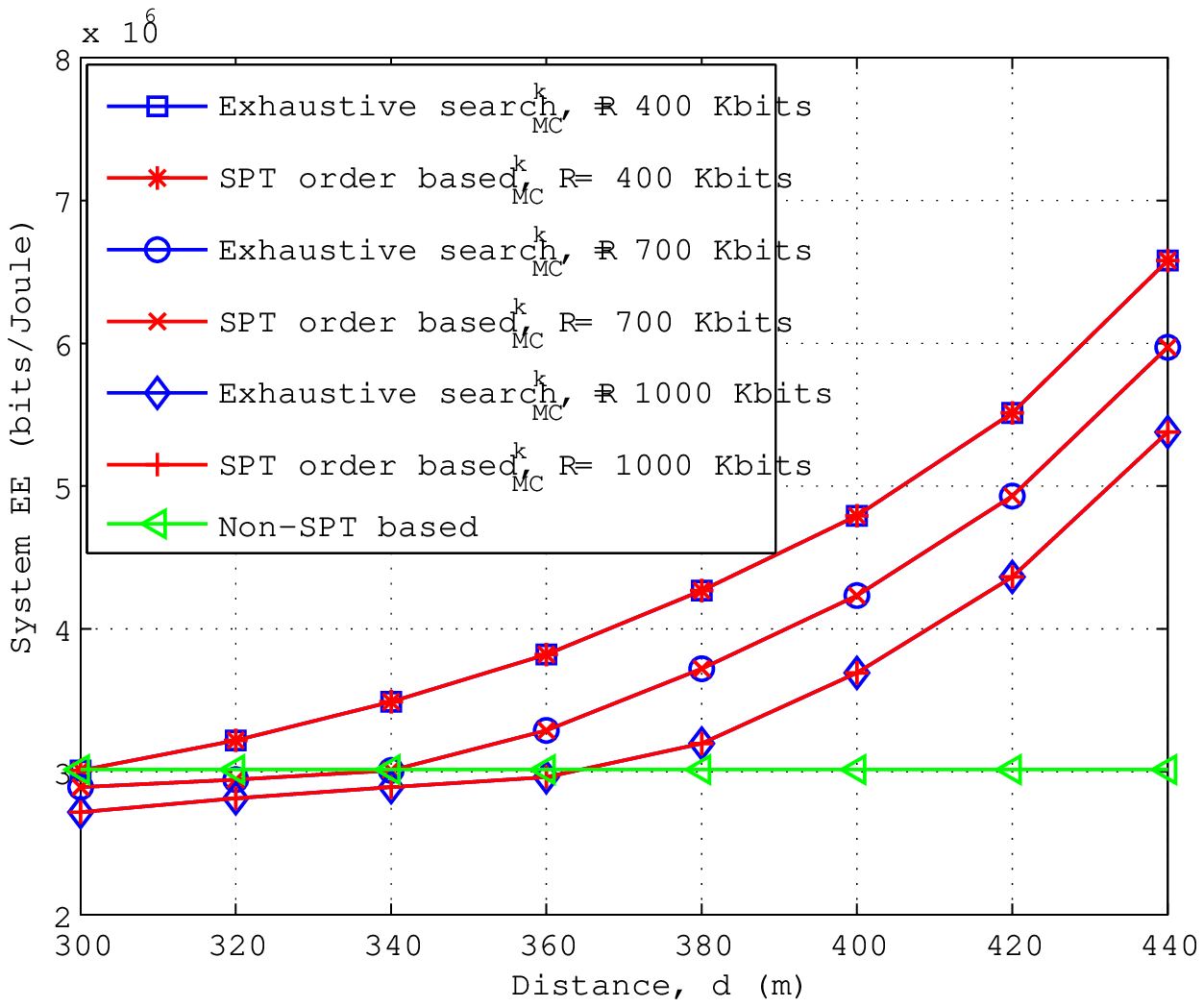}}
\caption{Effect of the distance between MUs and the MC BS on the power saved for MC and the system EE of the SC.} \label{fig6}
\end{figure}
\subsection{Effect of Distance between MUs and MC BS}
In Figure \ref{fig6}, we evaluate the performances of the exhaustive search scheme, SPT order based scheme, and the non-SPT based scheme versus the distance between MUs and the MC BS. Without loss of generality, we assume that all MUs are located at the same distance from the SC BS and the SC BS helps to serve all MUs from the MC, i.e., $x_k=1$, $\forall\, k$, where the MU selection is not performed.  Thus, the SPT order based scheme performs the same as the exhaustive search.
In Figure  \ref{fig6} (a), we can see that under the fixed minimum data rate requirements, more transmit power consumption is saved via the proposed spectrum-power trading when MUs are farther away from the MC BS. In addition, when the the MU data rate requirements are higher, it also saves more transmit power consumption for the MC BS. These implies that the proposed spectrum-power trading is effective by offloading the MUs to the SC BS, especially when the MUs are located in the cell edge area while requiring high user data rates. In contrast, in Figure  \ref{fig6} (b), we illustrate the system EE of the SC BS versus the distance between MUs and the MC BS. Basically, when the distance is larger, the proposed scheme enables higher system EE gain for the SC BS. However, as the MU data rate requirements, $R^k_{MC}$,  increase, the system EE of the SC BS decreases since it either obtains less bandwidth or costs more transmit power by serving these MUs. In particular, when the MUs are farther away from the SC BS and also require higher user data rates, the achieved system EE may even be lower than the system EE without MU offloading. This means that although the spectrum-power trading benefits the MC,
the MU selection is necessary to improve the system EE from the perspective of the SC.
\subsection{Effect of Licensed Bandwidth of MUs, $W^k_{MC}$}

\begin{figure}[t]
\centering
\subfigure[Number of selected MUs versus $W^k_{MC}$.]{\includegraphics[width=3.2in, height=2.4in]{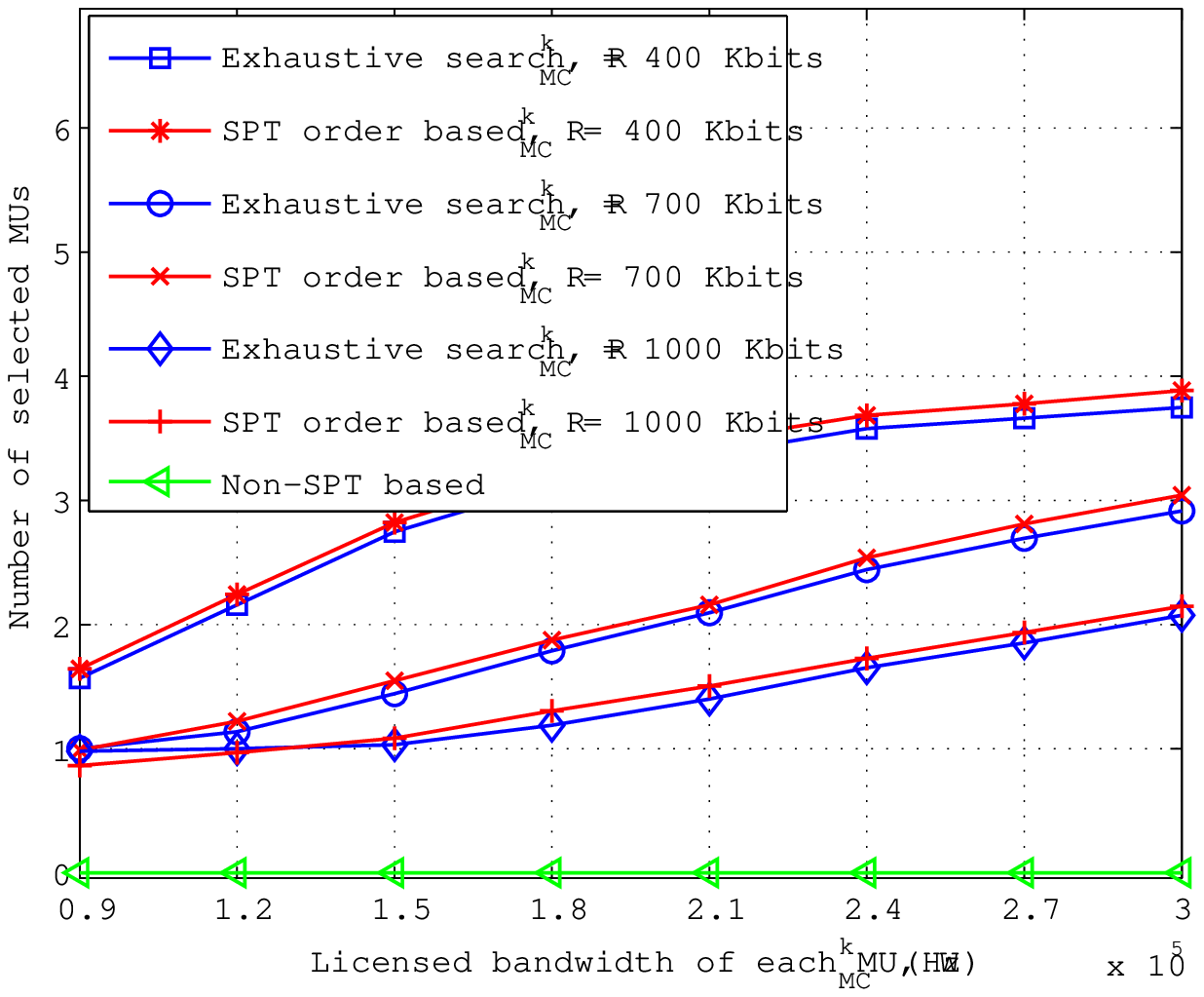}}
\subfigure[System EE of the SC versus $W^k_{MC}$.]{\includegraphics[width=3.2in,height=2.4in]{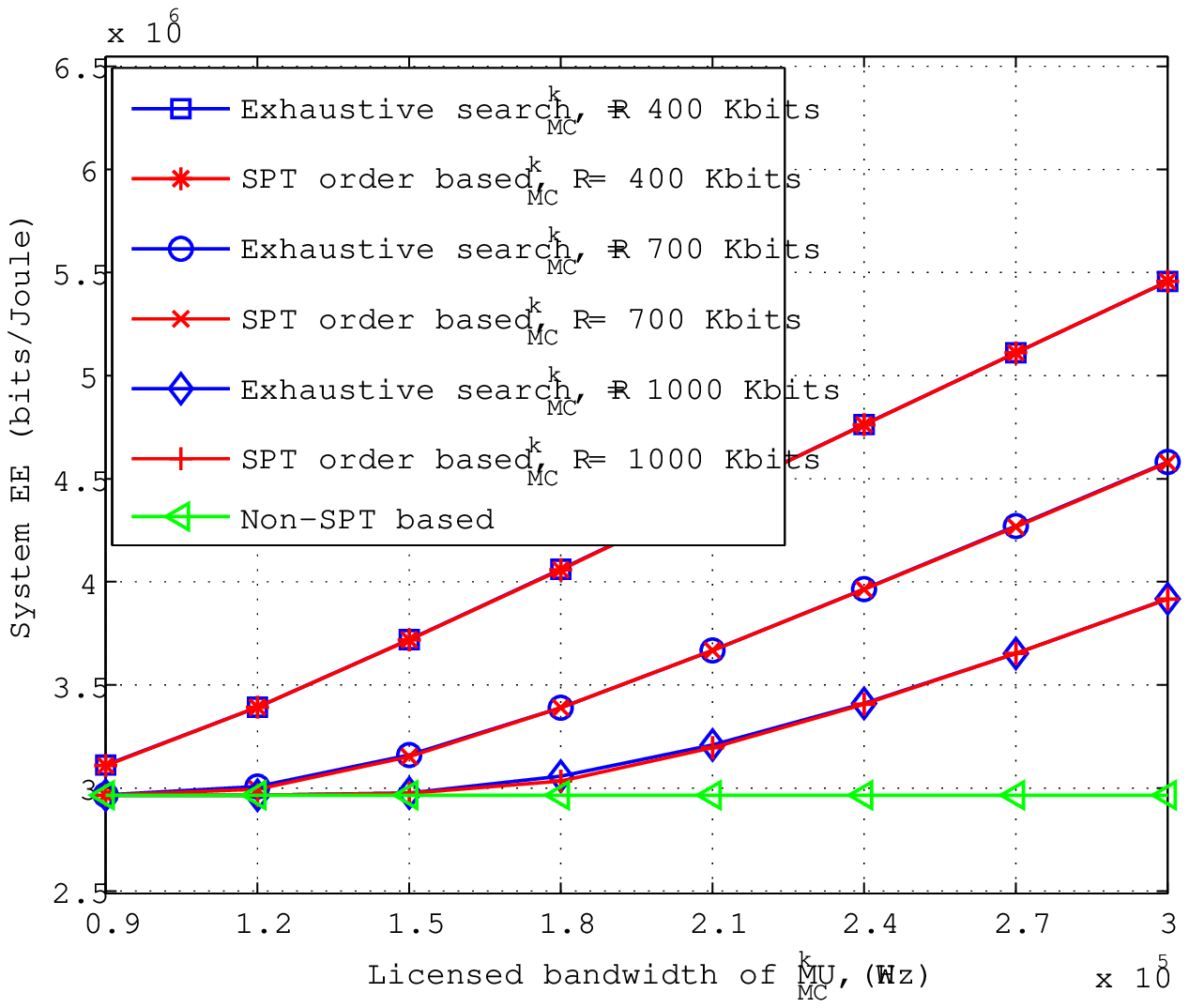}}
\caption{Effect of licensed bandwidth of MUs on the number of selected MUs and the system EE of the SC.} \label{fig7}
\end{figure}

In Figure \ref{fig7}, we we evaluate the effect of licensed bandwidth on the number of MUs selected by the SC as well as on the system EE of the SC. Specifically, in Figure \ref{fig7} (a), it is observed that the proposed SPT order based scheme still achieves an excellent  performance which further demonstrates the effectiveness of exploiting the trading EE for MU selection.
In addition, we can also find that the number of MUs selected by the SC increases with an increasing $W^k_{MC}$ under a fixed user data rate requirement. This also coincides with our theoretical analysis for trading EE in Section V:  due to the monotonically increasing characteristic of the trading EE with respect to $W^k_{MC}$,
an MU with more bandwidth provided by the MC achieves a higher trading EE such that this MU is more likely to satisfy the MU selection condition. In contrast, for a given $W^k_{MC}$, requiring higher user data rates provides less incentives for the SC to serve MUs and thus the number of MUs selected by the SC decreases with a more stringent $R^k_{MC}$.
In Figure \ref{fig7} (b), it is clear to see that the system EE increases with an increasing $W^k_{MC}$. The performance improvement comes from two aspects. First, given a fixed minimum data rate requirement of the MU, the more the bandwidth provided by the MC, the less the power consumed by the SC via spectrum-power trading, which thus helps to improve the system EE of the SC.  Second, as mentioned, a larger bandwidth will motivate the SC to serve more MUs and in return, to allow the SC to obtain more additional bandwidth  via spectrum-power trading.

\section{Conclusions}
In this paper, we investigated the spectrum-power trading between an SC and an MC to improve the system EE of the SC as well as reducing the power consumption of the MC. Specifically, MU selection, bandwidth allocation, and power allocation were jointly optimized while guaranteeing the QoS of both networks. The system EE maximization problem was first simplified by showing that the bandwidth from each MU is only shared with at most one SU in the SC. Given the MU selection, we transformed the fractional-form optimization problem into a substractive one that can be solved efficiently with optimality. Then, we proposed a trading EE based MU selection scheme by studying the intrinsic relationship between the trading EE of an MU and the system EE.
 {Simulation results showed that the proposed algorithm obtains close-to-optimal performance and  also demonstrated the performance gains achieved by the proposed spectrum-power trading scheme for both the SC and the MC, especially when MUs are far away from the MC BS. {For future work, although different weights can be assigned to different SUs to achieve a notion of fairness, it is still worth investigating the EE maximization problem with individual QoS constraints for SUs explicitly.}

\appendices
\section{Proof of Theorem 1}
{We first prove that the bandwidth of MU $m$ is shared with at most one SU who has the largest channel power gain on the bandwidth of MU $m$. Here, we use ``at most one" instead of ``only one" is because this MU may also be rejected by the SC in terms of spectrum-power trading, and thus the bandwidth of this MU may not be shared with any SU. The proof is shown by contradiction as follows.} Assume that  $S =\Big\{\{x^*_k\},\{p^*_n\},\{p^*_{k,n}\}, \{b^*_{k,n}\}, \{q^*_k\}, \{w^*_k\}\Big\}$ achieves the optimal solution of problem (\ref{eq16}) and there exist an MU $m$ whose bandwidth has been shared with two SUs, SU $m'$ and SU $\ell$, $\ell\neq m'$, in the SC, i.e.,  $b^*_{m, m'}>0$, $p^*_{m,m'}>0$ and $b^*_{m, \ell}>0$, $p^*_{m, \ell}>0$. {Denote SU $m'$ is the SU that has the largest channel power gain on the bandwidth of MU $m$, i.e., $g_{m,m'}>g_{m,n}, \forall\, n\neq m.$ Note that the probability of two SUs that have the same channel power gain is zero due to the continuity and the randomness of the channel fading.}
 Then, we construct a different solution $\widehat{S} = \Big\{\{\widehat{x}_k\},\{\widehat{p}_n\},\{\widehat{p}_{k,n}\}, \{\widehat{b}_{k,n}\}, \{\widehat{q}_k\}, \{\widehat{w}_k\}\Big\}$ where $\{\widehat{x}_k\}=\{x^*_k\}$, $\{\widehat{p}_n\}= \{p^*_n\}$,    $\{\widehat{q}_k\}=\{q^*_k\}$, $\{\widehat{w}_k\}= \{w^*_k\}$, and
\begin{align}\label{eq5}
\widehat{b}_{k,n} =\left\{
\begin{array}{lclcl}
b^*_{m, m'} + b^*_{m,\ell},& k=m, n = m',\\
0,& k=m, n\neq m',\\
b^*_{k,n},&  k\neq m, n\in \mathcal{N}.
\end{array}\right.
\end{align}
\begin{align}\label{eq5}
\widehat{p}_{k,n} =\left\{
\begin{array}{lclcl}
p^*_{m, m'} + p^*_{m,\ell},& k=m, n = m',\\
0,& k=m, n\neq m',\\
p^*_{k,n},&   k\neq m, n\in \mathcal{N}.
\end{array}\right.
\end{align}
We also note that the constructed solution satisfies all the constraints in problem (\ref{eq16}) and is thereby a feasible solution. Since the case when $x^*_k=0$ is obviously satisfied, we only discuss the case when $x^*_k=1$ in the following.
Then, the data rate of MU $m$ brought for the SC can be expressed as
\begin{align}
\widehat{b}_{m,m'}&\log_2\left(1+\frac{\widehat{p}_{m,m'}g_{m,m'}}{\widehat{b}_{m,m'}N_0}\right)
 =(b^*_{m,m'}+b^*_{m,\ell})\log_2\left(1+\frac{(p^*_{m,m'}+p^*_{m,\ell})g_{m,m'}}{(b^*_{m,m'}+b^*_{m,\ell})N_0}\right) \nonumber \\
&\overset{(a)}{\geq}{b}^*_{m,n}\log_2\left(1+\frac{{p}^*_{m,m'}g_{m,m'}}{{b}^*_{m,m'}N_0}\right) +{b}^*_{m,\ell}\log_2\left(1+\frac{{p}^*_{m,\ell}g_{m,m'}}{{b}^*_{m,\ell}N_0}\right) \nonumber \\
&\overset{(b)}> {b}^*_{m,n}\log_2\left(1+\frac{{p}^*_{m,m'}g_{m,m'}}{{b}^*_{m,m'}N_0}\right) +{b}^*_{m,\ell}\log_2\left(1+\frac{{p}^*_{m,\ell}g_{m,\ell}}{{b}^*_{m,\ell}N_0}\right),
\end{align}
where  inequality $(a)$ holds due to the concavity of $f \log_2(1+\frac{y}{f})$ and strict inequality $(b)$ holds due to $g_{m,m'}>g_{m,\ell}$, $\ell\neq m'$.
This means that the constructed solution $\widehat{S}$ achieves higher system data rate with the same total power consumption and thus yields higher system EE than $S^*$ which contradicts the assumption that $S^*$ is optimal.

Now, we show that constraints C2 and C3 are met with equalities and the proof is summarized as follows. 1) if $x_k=0$, then from C2, any feasible solution of problem (\ref{eq16}) must satisfy $b_{k,k'}+w_k\leq 0$. Since $b_{k,k'}\geq0$ and $w_k\geq0$, it follows that $b_{k,k'}=0$ and $w_k=0$. Thus, C2 and C3 are met with equalities. 2) if $x_k=1$ and  $b_{k,n} +w_{k}< W^k_{MC}$ and (or)  $w_k\log_2\left(1+\frac{q_kh_k}{w_kN_0}\right)< R^k_{MC}$ holds in the optimal solution, we can always construct another solution by increasing $b_{k,k'}$ and (or) decreasing $p_k$ such that C2 and C3 are met with equalities while achieving a larger system EE, which contradicts that the optimal solution is achieved under strict inequality constraints C2 and C3.
 Theorem 1 is thus proved.
\section{Proof of Theorem \ref{theorem70}}
Taking the partial derivative of $\mathcal{L}$ with respect to $p_n$, $p_{k,k'}$, and $w_k$, respectively, yields
\begin{align}
\frac{\partial\mathcal{L}}{\partial {p}_{n}}&=\frac{(1+\mu)B^n_{SC}g_{n}}{(B^n_{SC}N_0+{p}_{n}g_{n})\ln2}-\left({\frac{q}{\xi}+\lambda}\right), \forall\, n\in \mathcal{N}, \label{apdx_eq1}\\
\frac{\partial\mathcal{L}}{\partial {p}_{k,k'}}&=\frac{(1+\mu)(W^k_{MC}-w_k)g_{k,k'}}{((W^k_{MC}-w_k)N_0+{p}_{k,k'}g_{k,k'})\ln2}-\left({\frac{q}{\xi}+\lambda}\right), \forall\, k\in \Psi, \label{apdx_eq1}\\
\frac{\partial\mathcal{L}}{\partial w_{k}}&=-(1+\mu)\log_2\left(1+\frac{{p}_{k,k'}g_{k,k'}}{(W^k_{MC}-w_k)N_0}\right)+
                                                                                              \frac{(1+\mu){p}_{k,k'}g_{k,k'}}{\left((W^k_{MC}-w_k)N_0+{p}_{k,k'}g_{k,k'}\right)\ln2}
   \nonumber\\
                                                                                              &-\left(\frac{q}{\xi}+\lambda\right)\left(\left(2^\frac{R^{k}_{MC}}{w_k}-1\right)\frac{N_0}{h_k}- 2^\frac{R^{k}_{MC}}{w_k}\frac{R^{k}_{MC}N_0}{w_kh_k}\ln2 \right), \forall\, k\in \Psi.
\end{align}
Setting $\frac{\partial\mathcal{L}}{\partial {p}_{k,k'}}=0$ and $\frac{\partial\mathcal{L}}{\partial p_n}=0$, the optimal transmit power ${p}_{k,k'}$ and $p_n$ can be obtained as (\ref{eq4.25}) and (\ref{eq4.26}), respectively.
Substituting (\ref{eq4.25})  into $\frac{\partial\mathcal{L}}{\partial w_{k}}$ yields
\begin{align}
\frac{\partial\mathcal{L}}{\partial w_k}& =-(1+\mu)\log_2\left(1+{{\widetilde{p}}_{k,k'}\frac{g_{k,k'}}{N_0}}\right)+
                                                                                           \left(\frac{q}{\xi}+\lambda\right)\widetilde{p}_k
   \nonumber\\
                                                                                              &-\left(\frac{q}{\xi}+\lambda\right)\left(\left(2^\frac{R^{k}_{MC}}{w_k}-1\right)\frac{N_0}{h_k}- 2^\frac{R^{k}_{MC}}{w_k}\frac{R^{k}_{MC}N_0}{w_kh_k}\ln2 \right), \forall\, k\in \Psi,      \label{apdx_eq5}
\end{align}
where ${\widetilde{p}}_{k,k'} = \left[\frac{(1+\mu)\xi}{({q}+\lambda\xi)\ln2}-\frac{N_0}{g_{k,k'}}\right]^+$.
Note that $\frac{\partial\mathcal{L}}{\partial w_k}$ now only involves the optimization variable $w_k$. Setting $\frac{\partial\mathcal{L}}{\partial w_k}=0$, we have 
\begin{align}\label{eq31}
\left(2^\frac{R^{k}_{MC}}{w_k}\frac{R^{k}_{MC}N_0}{w_kh_k}\ln2 - \left(2^\frac{R^{k}_{MC}}{w_k}-1\right)\frac{N_0}{h_k} \right) = \frac{\mathcal{C}}{\frac{q}{\xi}+\lambda}, \forall\, k \in \Psi,
\end{align}
where $\mathcal{C}= (1+\mu)\log_2\left(1+{{\widetilde{p}}_{k,k'}\frac{g_{k,k'}}{N_0}}\right)-\left(\frac{q}{\xi}+\lambda\right)\widetilde{p}_k$ and (\ref{eq4.22})  is obtained from (\ref{eq31}). In addition, it is easy to verify that the left hand side of (\ref{eq31}) is a monotonically decreasing function of $w_k$, which implies that there exists a unique $w_k$ that satisfies (\ref{eq31}).  Thus, the value of $w_k$ can be efficiently obtained by the bisection method.
\section{Proof of Theorem \ref{trading}}\label{apdix1}
Given $w_k$ in problem (\ref{eq20}),  it is easy to see  that $EE_k$ increases with $b_{k,k'}$ and decreases with $q_k$. Thus, it can be verified that C2 and C4 are met with equalities at the optimal solution.
Substituting $b_{k,k'} =W^k_{MC}-w_k$ and $q_k= \left(2^\frac{R^k_{MC}}{w_k}-1\right)\frac{w_kN_0}{h_k}$ into problem (\ref{eq20}) results in  (\ref{eq201}). In addition, since $(W^k_{MC}- w_k)\log_2\left(1+\frac{p_{k,k'}g_{k,k'}}{(W^k_{MC}- w_k)N_0}\right)$ is strictly concave over $w_k$ and $\frac{p_{k,k'}}{\xi}+\left(2^\frac{R^k_{MC}}{w_k}-1\right)\frac{w_kN_0}{h_k\xi}$ are jointly convex over $p_{k,k'}$ and $w_k$, then it follows that the objective function of problem (\ref{eq201}), $EE_k$,  is jointly quasi-concave with respect to  $p_{k,k'}$ and $w_k$ \cite{Boyd, miao2010energy}, which completes the proof of Theorem 3.
\section{Proof of Theorem \ref{scheduling}}\label{apdix1}
{We first introduce a lemma  \cite{qing15_wpcn_twc} to facilitate the proof.
\begin{lemma}
Assume that  $a$, $b$, $c$, and $d$ are arbitrary positive numbers. Then, we have
\begin{align}\label{adx_eq1}
\min{\left\{\frac{a}{b},\frac{c}{d}\right\}} \leq\frac{a+c}{b+d}\leq\max{\left\{\frac{a}{b},\frac{c}{d}\right\}},
\end{align}
 where ``=" holds if and only if $\frac{a}{b}=\frac{c}{d}$.
\end{lemma}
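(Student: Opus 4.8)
The plan is to prove the two-sided inequality by a direct algebraic comparison, reducing everything to the sign of the single quantity $bc-ad$. First I would exploit the symmetry of the statement in the two fractions $\frac{a}{b}$ and $\frac{c}{d}$: the right-hand and left-hand sides are interchanged under swapping $(a,b)\leftrightarrow(c,d)$, so without loss of generality I may assume $\frac{a}{b}\le\frac{c}{d}$. Under this normalization the $\min$ and $\max$ are resolved, and the claim reduces to the sandwich $\frac{a}{b}\le\frac{a+c}{b+d}\le\frac{c}{d}$. Since $b,d>0$, the hypothesis $\frac{a}{b}\le\frac{c}{d}$ is equivalent to the polynomial inequality $ad\le bc$, which will be the only fact I use.

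Next I would establish the left-hand inequality by computing the difference
\[
\frac{a+c}{b+d}-\frac{a}{b}=\frac{b(a+c)-a(b+d)}{b(b+d)}=\frac{bc-ad}{b(b+d)},
\]
whose denominator is strictly positive, so the sign of the difference is exactly the sign of $bc-ad\ge 0$. The right-hand inequality follows from the symmetric computation
\[
\frac{c}{d}-\frac{a+c}{b+d}=\frac{c(b+d)-d(a+c)}{d(b+d)}=\frac{bc-ad}{d(b+d)}\ge 0,
\]
again because $b+d>0$ and $d>0$. Chaining the two displays yields $\frac{a}{b}\le\frac{a+c}{b+d}\le\frac{c}{d}$, i.e. the asserted $\min$–mediant–$\max$ ordering.

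Finally, for the equality characterization I would read it off the same two expressions: both differences vanish if and only if $bc-ad=0$, i.e. $\frac{a}{b}=\frac{c}{d}$; conversely, if $\frac{a}{b}=\frac{c}{d}=t$ then $a=tb$ and $c=td$, whence $\frac{a+c}{b+d}=\frac{t(b+d)}{b+d}=t$, so all three quantities coincide. I do not anticipate any real obstacle here — the result is elementary — and the only points requiring mild care are the initial reduction that removes the $\min/\max$ by fixing an ordering, and keeping track that every denominator ($b$, $d$, and $b+d$) is strictly positive so that no inequality direction is reversed.
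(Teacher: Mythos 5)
Your proof is correct and complete: the reduction by symmetry to the case $\frac{a}{b}\le\frac{c}{d}$, the two sign computations via $bc-ad$, and the equality characterization are all sound, and you correctly track that $b$, $d$, and $b+d$ are strictly positive. For comparison, the paper does not actually prove this lemma at all --- it simply cites it from an earlier work --- so your elementary mediant-inequality argument supplies a self-contained proof where the paper offers none.
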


Based on Lemma 1, we first prove 1) in Theorem \ref{scheduling}.
{Let $\mathcal{S^*}=\Big\{\{{p}^*_{n}\}, \{{p}^*_{k,k'}\}, \{{q}^*_k\}, \{{b}^*_{k,k'}\}, \{{w}^*_k\}\Big\}$ denote the optimal solution of problem (\ref{eq20}) and its corresponding user EE is denoted as $EE_k$. Let $\mathcal{\widehat{S}}=\Big\{\{{p}^*_{n}\}, \{\widehat{p}_{k,k'}\}, \{\widehat{q}_k\},$  $\{\widehat{b}_{k,k'}\}, \{\widehat{w}_k\}\Big\}$ and $\mathcal{\widetilde{S}}=\Big\{\{\widetilde{{p}}_{n}\}, \{\widetilde{p}_{k,k'}\}, \{\widetilde{q}_k\}, \{\widetilde{b}_{k,k'}\},\{\widetilde{w}_k\}\Big\}$ denote the optimal solutions of problem (\ref{eq16}) with $x_k=1$ for $k\in \Psi$ and  $k\in \Psi\bigcup \{m\}$, respectively,  where $m\notin \Psi$.} The corresponding system EEs are denoted as $EE^*_{\Psi}$ and $EE^*_{\Psi\bigcup \{m\}}$, respectively.
Then, we have the following
\begin{eqnarray}\label{eq39}
EE^*_{\Psi\bigcup \{m\}}
&=& \frac{\sum_{n=1}^Nr^n_{SC}(\widetilde{p}_{n})+\sum_{k\neq m}r_{k,k'}(\widetilde{b}_{k,k'},\widetilde{p}_{k,k'}) + r_{m,m'}(\widetilde{b}_{m,m'},\widetilde{p}_{m,m'})}
{\sum_{n=1}^N\frac{\widetilde{p}_n}{\xi} + \sum_{k\neq m}\frac{\widetilde{p}_{k,k'}}{\xi}+\sum_{k\neq m}\frac{\widetilde{q}_k}{\xi}+P_{\rm{c}} + \frac{\widetilde{p}_{m,m'}}{\xi}+ \frac{\widetilde{q}_m}{\xi}} \nonumber  \\
&\overset{(a)} \geq& \frac{\sum_{n=1}^Nr^n_{SC}(\widehat{p}_{n})+\sum_{k\neq m}r_{k,k'}(\widehat{b}_{k,k'},\widehat{p}_{k,k'}) + r_{m,m'}({b}^*_{m,m'},{p}^*_{m,m'})}
{\sum_{n=1}^N\frac{\widehat{p}_n}{\xi} + \sum_{k\neq m}\frac{\widehat{p}_{k,k'}}{\xi}+\sum_{k\neq m}\frac{\widehat{q}_k}{\xi}+P_{\rm{c}} + \frac{{p}^*_{m,m'}}{\xi}+ \frac{{q}^*_m}{\xi}}  \nonumber  \\
&\overset{(b)} \geq& \min\left\{\frac{\sum_{n=1}^Nr^n_{SC}(\widehat{p}_{n})+\sum_{k\neq m}r_{k,k'}(\widehat{b}_{k,k'},\widehat{p}_{k,k'}) }
{\sum_{n=1}^N\frac{\widehat{p}_n}{\xi} + \sum_{k\neq m}\frac{\widehat{p}_{k,k'}}{\xi}+\sum_{k\neq m}\frac{\widehat{q}_k}{\xi}+P_{\rm{c}}},
\frac{ r_{m,m'}({b}^*_{m,m'},{p}^*_{m,m'})}{\frac{p^*_{m,m'}}{\xi}+ \frac{q^*_m}{\xi}} \right\}    \nonumber \\
&=& \min\left\{EE^*_{\Psi}, EE^*_{m}\right\},
\end{eqnarray}
where inequality $(a)$ holds due to the fact that $\widetilde{S}$ is the optimal solution of problem (\ref{eq16}) with $x_k=1$ for $k\in \Psi\bigcup \{m\}$. Inequality  $(b)$ holds due to  Lemma 1 and the equality ``='' holds only when $EE^*_{\Psi}= EE^*_{m}$. Thus, we can conclude $EE^*_{m} > EE^*_{\Psi}$ $\Longrightarrow$ $EE^*_{\Psi\bigcup \{m\}} > EE^*_{\Psi}$, which completes the proof of the ``if'' part. In the next, we prove $EE^*_{\Psi\bigcup \{m\}}>EE^*_{\Psi}$ $\Longrightarrow$ $EE^*_m> EE^*_{\Psi}$, which is equivalent to its contrapositive proposition, i.e., $EE^*_{m} \leq EE^*_{\Psi}$ $\Longrightarrow$ $EE^*_{\Psi\bigcup \{m\}} \leq EE^*_{\Psi}$.
 Then, we have the following
 \begin{eqnarray}\label{eq391}
EE^*_{\Psi\bigcup \{m\}}
&=& \frac{\sum_{n=1}^Nr^n_{SC}(\widetilde{p}_{n})+\sum_{k\neq m}r_{k,k'}(\widetilde{b}_{k,k'},\widetilde{p}_{k,k'}) + r_{m,m'}(\widetilde{b}_{m,m'},\widetilde{p}_{m,m'})}
{\sum_{n=1}^N\frac{\widetilde{p}_n}{\xi} + \sum_{k\neq m}\frac{\widetilde{p}_{k,k'}}{\xi}+\sum_{k\neq m}\frac{\widetilde{q}_k}{\xi}+P_{\rm{c}} + \frac{\widetilde{p}_{m,m'}}{\xi}+ \frac{\widetilde{q}_m}{\xi}} \nonumber  \\
&\overset{(c)} \leq& \max\left\{\frac{\sum_{n=1}^Nr^n_{SC}(\widetilde{p}_{n})+\sum_{k\neq m}r_{k,k'}(\widetilde{b}_{k,k'},\widetilde{p}_{k,k'}) }
{\sum_{n=1}^N\frac{\widetilde{p}_n}{\xi} + \sum_{k\neq m}\frac{\widetilde{p}_{k,k'}}{\xi}+\sum_{k\neq m}\frac{\widetilde{q}_k}{\xi}+P_{\rm{c}}},
\frac{r_{m,m'}(\widetilde{b}_{m,m'},\widetilde{p}_{m,m'})}
{\frac{\widetilde{p}_{m,m'}}{\xi}+ \frac{\widetilde{q}_m}{\xi}}\right\}    \nonumber \\
&\overset{(d)} \leq& \max\left\{\frac{\sum_{n=1}^Nr^n_{SC}(\widehat{p}_{n})+\sum_{k\neq m}r_{k,k'}(\widehat{b}_{k,k'},\widehat{p}_{k,k'}) }
{\sum_{n=1}^N\frac{\widehat{p}_n}{\xi} + \sum_{k\neq m}\frac{\widehat{p}_{k,k'}}{\xi}+\sum_{k\neq m}\frac{\widehat{q}_k}{\xi}+P_{\rm{c}}},
\frac{ r_{m,m'}({b}^*_{m,m'},{p}^*_{m,m'})}{\frac{p^*_{m,m'}}{\xi}+ \frac{q^*_m}{\xi}} \right\}    \nonumber \\
&=& \max\left\{EE^*_{\Psi}, EE^*_{m}\right\},
\end{eqnarray}
where inequality $(c)$ holds due to Lemma 1 and the equality ``='' represents the special case when the current SC EE is the same as trading EE of MU $k$.  Inequality $(d)$ holds due to the fact that both ${\widehat{S}}$ and ${S}^*$  are optimal solutions of problem (\ref{eq16}) with $x_k=1$ for $k\in \Psi$ and problem (\ref{eq20}), respectively. Thus, if $EE^*_{m} \leq EE^*_{\Psi}$, then we can conclude $EE^*_{\Psi\bigcup \{m\}} \leq EE^*_{\Psi}$ from (\ref{eq391}), which completes the proof of the ``only if" part.

Based on 1), we next prove 2) in Theorem \ref{scheduling}. When the minimum system data rate constraint C4 instead of C1 is considered in problem (\ref{eq16}), the inequality $(d)$ may not hold in (\ref{eq391}). This is because $\widetilde{S}$ only needs to satisfy
\begin{align}
&\sum_{n=1}^Nr^n_{SC}(\widetilde{p}_n) + \sum_{k=1}^Kr_{k,k'}(\widetilde{b}_{k,k'}, \widetilde{p}_{k,k'}) \nonumber \\
=&\sum_{n=1}^Nr^n_{SC}(\widetilde{p}_n) + \sum_{k\neq m}r_{k,k'}(\widetilde{b}_{k,k'}, \widetilde{p}_{k,k'}) + r_{m,m'}(\widetilde{b}_{m,m'}, \widetilde{p}_{m,m'})   \geq R^{SC}_{\mathop{\min}},  \label{eq_apdx38}
\end{align}
while $\widehat{S}$ has to satisfy
\begin{align}
\sum_{n=1}^Nr^n_{SC}({\widehat{p}}_n) + \sum_{k\neq m}r_{k,k'}(\widehat{b}_{k,k'}, \widehat{p}_{k,k'})  \geq R^{SC}_{\mathop{\min}}. \label{eq_apdx39}
\end{align}
From (\ref{eq_apdx38}) and (\ref{eq_apdx39}), we note that the feasible transmit power region of  $\widetilde{S}$ is larger than the feasible region composed of $S^*$ and $\widehat{S}$, which leads that the inequality $(d)$ may not hold in (\ref{eq391}).  However, based on this,  it is straightforward to show that inequality $(a)$ still holds in (\ref{eq39}).

Based on 1), we next prove 3) in Theorem \ref{scheduling}. When the total power constraint C1 instead of C4 is considered in problem (\ref{eq16}), the inequality $(a)$ may not hold in (\ref{eq39}). This is because the solutions $S^*$ and $\widehat{S}$ are restricted to individual total power constraints, i.e.,
 \begin{align}
\sum_{n=1}^{N}{\widehat{p}_{n}} + \sum_{k\neq m}\widehat{p}_{k,k'}+\sum_{k\neq m}\widehat{q}_k\leq P^{SC}_{\mathop{\max}},  \label{eq_apdx35}\\
{p}^*_{m,m'}\geq 0, {q}^*_m\geq 0,  \label{eq_apdx36}
 \end{align}
 while $\widetilde{S}$ only have one total power constraint,
  \begin{align}
\sum_{n=1}^{N}{\widetilde{p}_{n}} +\sum_{k=1}^K\widetilde{p}_{k,k'} +\sum_{k=1}^{K}\widetilde{q}_k =\sum_{n=1}^{N}{\widetilde{p}_{n}}  + \sum_{k\neq m}\widetilde{p}_{k,k'}+\sum_{k\neq m}\widetilde{q}_k + \widetilde{p}_{m,m'}+\widetilde{q}_m \leq P^{SC}_{\mathop{\max}}. \label{eq_apdx37}
 \end{align}
From (\ref{eq_apdx35}), (\ref{eq_apdx36}), and (\ref{eq_apdx37}), we note that the feasible transmit power region composed of $S^*$ and $\widehat{S}$ is larger than the feasible region of  $\widetilde{S}$, which leads that the inequality $(a)$ may not hold in (\ref{eq39}). However, based on this, it is straightforward to show that inequality $(d)$ still holds in (\ref{eq391}).

\section{Proof of Corollary \ref{optmality}}\label{apdix1}
Since the MUs are sorted in the descending order in terms of the trading EE, i.e., $EE_{1}^{*} > EE_{2}^{*}>,...,> EE_{K}^{*}$, we have the following lemma in the absence of constraints C1 and C4.
\begin{lemma}
1) If selecting MU $k$ increases the system EE of the SC, i.e., $EE_{\Psi}< EE_k$,  then selecting MU $\ell $, $\forall\, \ell\leq k$, also increases the system EE of the SC. 2) If selecting MU $k$ decreases the system EE of the SC, i.e., $EE_{\Psi}>EE_k$, then selecting MU $\ell $, $\forall\, \ell\geq k$, also decreases the system EE of the SC.
\end{lemma}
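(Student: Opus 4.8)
The plan is to read the lemma as a statement purely about the sorted list of trading EEs and then let part~1) of Theorem~\ref{scheduling} do all the work. In the absence of C1 and C4 that part says, for any $m\notin\Psi$, that serving MU $m$ changes the system EE of the SC according to the sign of $EE^*_m-EE^*_\Psi$: serving $m$ strictly increases the EE iff $EE^*_m>EE^*_\Psi$, and --- the equality $EE^*_m=EE^*_\Psi$ occurring with probability zero by the same continuity/randomness of the fading invoked in Appendix~A --- serving $m$ strictly decreases the EE iff $EE^*_m<EE^*_\Psi$. Hence the entire content of the lemma is to propagate one of these strict inequalities from the index $k$ to a neighbouring index, and such propagation is immediate once the MUs are relabelled so that $EE^*_1>EE^*_2>\cdots>EE^*_K$.

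Concretely, for part~1) I would take the hypothesis $EE^*_\Psi<EE^*_k$; for any $\ell\le k$ with $\ell\notin\Psi$ the descending order gives $EE^*_\ell\ge EE^*_k$, hence $EE^*_\ell>EE^*_\Psi$, and the ``if'' direction of part~1) of Theorem~\ref{scheduling} applied with $m=\ell$ shows that selecting MU $\ell$ increases the system EE. For part~2) I would take the hypothesis $EE^*_\Psi>EE^*_k$; for any $\ell\ge k$ with $\ell\notin\Psi$ the descending order gives $EE^*_\ell\le EE^*_k<EE^*_\Psi$, and the contrapositive of the same equivalence shows that selecting MU $\ell$ decreases the system EE. In other words, both parts collapse to the trivial ordering facts $EE^*_\ell\ge EE^*_k$ (for $\ell\le k$) and $EE^*_\ell\le EE^*_k$ (for $\ell\ge k$), fed through Theorem~\ref{scheduling}.

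The only point I expect to need care is making the ``decrease'' in part~2) genuinely \emph{strict} rather than merely ``not an increase''. Two routes close this gap. The first is the zero-probability argument above: since $EE^*_\ell<EE^*_\Psi$ holds strictly (it is $\le EE^*_k<EE^*_\Psi$), Theorem~\ref{scheduling} already forbids $EE^*_{\Psi\cup\{\ell\}}=EE^*_\Psi$. The second is to re-examine bound~(\ref{eq391}): because C3 forces the power spent on MU $\ell$ to be strictly positive whenever $x_\ell=1$ (as $R^\ell_{MC}>0$), $EE^*_{\Psi\cup\{\ell\}}$ is a genuine mediant of the SC-only ratio (which is $\le EE^*_\Psi$) and the realized trading ratio of MU $\ell$ (which is $\le EE^*_\ell<EE^*_\Psi$), so by the equality case of Lemma~1 it lies strictly below $EE^*_\Psi$. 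I would record one of these as a one-line remark; everything else follows at once from the sort order and Theorem~\ref{scheduling}.
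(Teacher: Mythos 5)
Your proof is correct and follows essentially the same route as the paper's: both arguments reduce the lemma to the descending order of the trading EEs combined with the ``if and only if'' condition of part~1) of Theorem~\ref{scheduling}. Your added remark on making the decrease in part~2) strict (via the mediant/equality case of Lemma~1) addresses a point the paper's own proof passes over silently, but it does not change the approach.
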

\begin{proof}
 If $EE_k> EE_{\Psi}$, then we have $EE_{\ell}\geq EE_k >EE_{\Psi}$, $\forall\, \ell\leq k$,  due to the descending order of MUs. Since $EE_{\ell}> EE_{\Psi}$ has been proved as the sufficient and necessary for selecting MU $\ell$ in Theorem \ref{scheduling}, we have the first statement. If  $EE_k< EE_{\Psi}$, then we have $EE_{\ell}\leq EE_k< EE_{\Psi}$, $\forall\, \ell \geq k$, which results the second statement.
\end{proof}

 From Lemma 2, it is easy to prove that there exists an MU $k^*$, for $0 \leq k^*\leq K$  such that the system EE of SC
 increases with $k$ for $0\leq k\leq k^*$ and decreases with $k$ for $k^* \leq k \leq K$, respectively.  As special cases, $k^*= 0$ or $k^*=K$ means the system EE without spectrum-power trading or with spectrum-power trading for all MUs. Thus, we have the following corollary which can be easily proved based on previous the above discussion.  
\begin{corollary}\label{set_alg2}
 MUs only with order index $0\leq k\leq k^*$, $\forall\, 0  \leq k^*\leq K$, are selected by Algorithm 2.
\end{corollary}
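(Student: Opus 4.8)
The plan is to establish two facts and then combine them: (i) the set $\Psi$ returned by Algorithm~2 is always an \emph{initial segment} of the sorted MU list, i.e.\ $\Psi=\{1,2,\dots,j\}$ for some $j$ with $0\le j\le K$; and (ii) that segment is exactly $\{1,\dots,k^*\}$, where $k^*$ is the index at which the system EE of the nested prefixes is maximized, as identified in the discussion preceding the statement. Because constraints C1 and C4 are dropped, Theorem~\ref{scheduling}(1) is available, so that the acceptance test $EE^*_{\Psi\cup\{k\}}>EE^*_{\Psi}$ performed at iteration $k$ is equivalent to the trading-EE comparison $EE^*_k>EE^*_{\Psi}$; this equivalence and Lemma~2 are the only tools needed.

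For (i), I would track the running set $\Psi_{k-1}$ at the start of iteration $k$. If MU~$k$ is accepted, then $EE^*_{\Psi_k}=EE^*_{\Psi_{k-1}\cup\{k\}}>EE^*_{\Psi_{k-1}}$ by the very test Algorithm~2 applies; if it is rejected, then $\Psi_k=\Psi_{k-1}$ and the EE is unchanged, so the running EE is non-decreasing along the loop. In particular, after the \emph{first} rejection, say at iteration $k$, the running set freezes at $\Psi_{k-1}$; and then every later MU is rejected too, by an induction on $\ell>k$: the rejection gives $EE^*_k\le EE^*_{\Psi_{k-1}}$, so by the strict descending order $EE^*_1>\cdots>EE^*_K$ we get $EE^*_\ell<EE^*_k\le EE^*_{\Psi_{k-1}}=EE^*_{\Psi_{\ell-1}}$, hence MU~$\ell$ fails the test against the frozen set (this is exactly Lemma~2(2)). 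Consequently the accepted indices form $\{1,\dots,j\}$, with $j=0$ if none is accepted and $j=K$ if all are.

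For (ii), I would carry the invariant that, as long as no rejection has yet occurred, $\Psi_{k-1}=\{1,\dots,k-1\}$ at the start of iteration $k$, so that the test at iteration $k$ reads ``does adjoining MU~$k$ to the prefix $\{1,\dots,k-1\}$ increase the system EE?''. By Lemma~2 and the threshold statement established just before the corollary, the map $k\mapsto EE^*_{\{1,\dots,k\}}$ rises for $k\le k^*$ and falls for $k\ge k^*$, so this test succeeds for every $k\le k^*$ and fails at $k=k^*+1$. Hence iterations $1,\dots,k^*$ all accept, giving $\Psi_{k^*}=\{1,\dots,k^*\}$; iteration $k^*+1$ rejects; and by part~(i) the set then stays frozen at $\{1,\dots,k^*\}$ through the remaining iterations. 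Therefore $\Psi=\{1,\dots,k^*\}$, with the degenerate cases $k^*=0$ (no MU scheduled) and $k^*=K$ (all MUs scheduled) covered automatically.

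The argument is short, and I expect the only genuine care to be in the \emph{bookkeeping}: the acceptance test depends on the current running set $\Psi_{k-1}$, not on a fixed prefix, so parts (i) and (ii) have to be interleaved — the monotonicity of $EE^*_{\Psi_k}$ from (i) is precisely what lets the prefix invariant in (ii) propagate, and the freezing-after-first-rejection from (i) is what closes the loop after iteration $k^*+1$. A secondary, cosmetic point is the boundary/tie case $EE^*_{\Psi}=EE^*_m$ in Theorem~\ref{scheduling}(1) (and at $k=k^*$); this occurs with probability zero under the channel-fading model used in Appendix~A, so the relevant inequalities are strict and $k^*$ is unambiguously defined.
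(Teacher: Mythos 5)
Your proof is correct and follows essentially the same route as the paper, which leaves this corollary as ``easily proved based on the above discussion'': the ingredients are exactly the paper's Lemma~2, the iff-condition of Theorem~\ref{scheduling}(1) (valid since C1 and C4 are dropped), and the unimodality of the prefix-EE sequence. Your write-up merely supplies the freezing-after-first-rejection and prefix-invariant bookkeeping that the paper omits, and correctly dismisses ties as probability-zero events.
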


For the purpose of illustration, we denote $\Psi=\{0\}$ as the case when no MU is selected by the SC for spectrum-power trading.
With Corollary \ref{set_alg2}, we only need to prove that the system EE of the SC achieved based on MU set $\Psi^*= \{0, 1, ..., k^*\}$ is larger than that of any other set $\Psi$, which is shown by contradiction. Without loss of generality, we assume that $\widehat{\Psi}$ is the optimal MU set but there exists an MU $m$ for $m\leq k^*$ that does not belong to $\widehat{\Psi}$ and  an MU $n$ for $n> k^*$ that belongs to $\widehat{\Psi}$, i.e., $\widehat{\Psi} = \{0, ..., m-1,m+1,..., k^*, n\}$. All other cases can be directly extended from the study of this assumption. Thus, we only need to show $EE^*_{\Psi^*}>EE^*_{\widehat{\Psi}}$.  We introduce an auxiliary MU set $\widetilde{\Psi}= \{0, 1,..., k^*, n\}$. Since MU $k^*$ and $(k^*+1)$ is selected and not selected by Algorithm 2, respectively,
 from Theorem \ref{scheduling}, we have  $EE^*_{\Psi^*}<EE^*_{k^*}\leq EE^*_{m} $ and $EE^*_{\Psi^*}>EE^*_{k^*+1}\geq EE^*_{n}$, respectively, due to the descending order.  With $EE^*_{\Psi^*}> EE^*_{n}$, it follows that   $EE^*_{\Psi^* \bigcup\{n\} }=EE^*_{\widetilde{\Psi}} < EE^*_{\Psi^*}<EE^*_{k^*}\leq EE^*_{m}$. Then, with $EE^*_{\widetilde{\Psi}}< EE^*_{m}$, it follows that  $EE^*_{\widetilde{\Psi}\backslash m}= EE^*_{\widehat{\Psi}}<EE^*_{\widetilde{\Psi}}$, which contradicts the assumption that $\widehat{\Psi}$ is the optimal MU set. Corollary 1 is thus proved.
\bibliographystyle{IEEEtran}
\bibliography{IEEEabrv,mybib}

\end{document}